  \providecommand\BibTeX{{%
    \normalfont B\kern-0.5em{\scshape i\kern-0.25em b}\kern-0.8em\TeX}}}
\keywords{uncertain Markov decision processes; bounded-parameter Markov decision processes; value iteration; planning under uncertainty; control synthesis}
\newtheorem{theorem}{Theorem}[section]
\newtheorem{corollary}[theorem]{Corollary}
\newtheorem{definition}[theorem]{Definition}
\newtheorem{lemma}[theorem]{Lemma}
\newtheorem{proposition}[theorem]{Proposition}
\newtheorem{remark}{Remark}
\newtheorem*{problem}{Problem}
\definecolor{myred}{RGB}{108,0,0}
\DeclareMathOperator*{\argmax}{arg\,max}
\newcommand{\real}{\mathbb{R}}
\newcommand{\caimdp}{\NoCaseChange{\textsc{ca}}\textsc{IMDP}\xspace}
\newcommand{\pr}{\mathbb{P}}
\newcommand{\expec}{\mathrm{E}}
\newcommand{\adv}{\boldsymbol{\xi}}
\newcommand{\Adv}{\Xi}
\newcommand{\ind}{\mathds{1}}
\newcommand{\Qimdp}{\mathcal{Q}}
\newcommand{\Aimdp}{\mathcal{A}}
\newcommand{\I}{\mathcal{I}}
\newcommand{\Pup}{\hat{P}}
\newcommand{\Plow}{\check{P}}
\newcommand{\Pathimdpfin}{\mathit{Paths}^{\mathrm{fin}}}
\newcommand{\Str}{\Pi}
\newcommand{\str}{\boldsymbol{\pi}}
\newcommand{\ver}{\mathrm{ver}}
\begin{document}
\title{Interval Markov Decision Processes with Continuous Action-Spaces}

\author{Giannis Delimpaltadakis}
\email{i.delimpaltadakis@tue.nl}
\affiliation{%
     \department{Control Systems Technology Group, Mechanical Engineering}
      \institution{Eindhoven University of Technology}
      \city{Eindhoven}
      \country{Netherlands}
    }
\author{Morteza Lahijanian}
\email{morteza.lahijanian@colorado.edu}
\affiliation{%
    \department{Aerospace Engineering Sciences \& Computer Science}
    \institution{University of Colorado Boulder}
    \city{Boulder, Colorado}
    \country{U.S.A.}
}
\author{Manuel Mazo Jr.}
\email{m.mazo@tudelft.nl}
\affiliation{%
    \department{Delft Center for Systems and Control, Mechanical Engineering}
    \institution{Delft University of Technology}
    \city{Delft}
    \country{Netherlands}
}
\author{Luca Laurenti}
\email{l.laurenti@tudelft.nl}
\affiliation{%
    \department{Delft Center for Systems and Control, Mechanical Engineering}
    \institution{Delft University of Technology}
    \city{Delft}
    \country{Netherlands}
}
\begin{CCSXML}
<ccs2012>
   <concept>
       <concept_id>10002950.10003648.10003700.10003701</concept_id>
       <concept_desc>Mathematics of computing~Markov processes</concept_desc>
       <concept_significance>500</concept_significance>
       </concept>
   <concept>
       <concept_id>10002950.10003714.10003716.10011138.10010046</concept_id>
       <concept_desc>Mathematics of computing~Stochastic control and optimization</concept_desc>
       <concept_significance>500</concept_significance>
       </concept>
   <concept>
       <concept_id>10010147.10010178.10010199.10010201</concept_id>
       <concept_desc>Computing methodologies~Planning under uncertainty</concept_desc>
       <concept_significance>500</concept_significance>
       </concept>
   <concept>
       <concept_id>10010147.10010178.10010213.10010214</concept_id>
       <concept_desc>Computing methodologies~Computational control theory</concept_desc>
       <concept_significance>300</concept_significance>
       </concept>
 </ccs2012>
\end{CCSXML}

\ccsdesc[500]{Mathematics of computing~Markov processes}
\ccsdesc[500]{Mathematics of computing~Stochastic control and optimization}
\ccsdesc[500]{Computing methodologies~Planning under uncertainty}
\ccsdesc[300]{Computing methodologies~Computational control theory}
\begin{abstract}
Interval Markov Decision Processes (IMDPs) are finite-state uncertain Markov models, where the transition probabilities belong to intervals. Recently, there has been a surge of research on employing IMDPs as abstractions of stochastic systems for control synthesis. However, due to the absence of algorithms for synthesis over IMDPs with continuous action-spaces, the action-space is assumed discrete a-priori, which is a restrictive assumption for many applications. Motivated by this, we introduce continuous-action IMDPs ({\caimdp}s), where the bounds on transition probabilities are functions of the action variables, and study value iteration for maximizing expected cumulative rewards. Specifically, we decompose the max-min problem associated to value iteration to $|\Qimdp|$ max problems, where $|\Qimdp|$ is the number of states of the {\caimdp}. Then, exploiting the simple form of these max problems, we identify cases where value iteration over {\caimdp}s can be \emph{solved efficiently} (e.g., with linear or convex programming).
We also gain other interesting insights: e.g., in certain cases where the action set $\Aimdp$ is a polytope, 
synthesis over a discrete-action IMDP, where the actions are the vertices of $\Aimdp$, is sufficient for optimality. We demonstrate our results on a numerical example. Finally, we include a short discussion on employing {\caimdp}s as abstractions for control synthesis.
\end{abstract}

\maketitle

\section{Introduction}
\subsection{Motivation and Contributions}
Interval Markov Decision Processes (IMDPs; alternatively called bounded-parameter Markov decision processes) are a class of uncertain finite-state Markov Decision Processes (MDPs), where transition probabilities between states are only known to belong to intervals \citep{givan2000bounded}. Due to their modelling flexibility and the availability of efficient planning algorithms \citep{givan2000bounded}, IMDPs have recently gained popularity in the control and computer-science communities for verification of and control-synthesis for uncertain systems (see, e.g., \citep{lahijanian2015formal,laurenti2020formal,coogan2022abstraction, lavaei2022automated,delimpaltadakis2022formal}). However, due to the absence of computational algorithms for IMDPs with continuous action-spaces, the action set is, generally, assumed discrete \citep{lahijanian2015formal,laurenti2020formal}. This is a restrictive assumption as many realistic control applications involve continuous underlying action-spaces. Furthermore, with the current available algorithms, the only way to address continuous action-spaces is either a) to discretize the action-space and deal with discrete-action IMDPs, or b) solve the optimization problems associated to control synthesis using heuristics, as recently proposed in \cite{coogan2022abstraction}. Nonetheless, a) scales exponentially with the dimension of the action space, rendering high-dimensional problems intractable. Moreover, b) - and a), when the discretization is carried out blindly - results into suboptimal solutions without suboptimality bounds.

Motivated by the above, we introduce continuous-action Interval Markov Decision Processes ({\caimdp}s), where the transition probability intervals are functions of the action variables. We study value iteration over {\caimdp}s for synthesizing policies that maximize finite-horizon pessimistic expected cumulative rewards (which is referred to as the \emph{robust control problem}), and identify cases where value iteration can be \emph{solved efficiently}. Specifically, we show that the max-min optimization problem associated to value iteration is equivalent to solving $|\Qimdp|$ maximization problems, where $|\Qimdp|$ is the number of states of the {\caimdp}. Then, exploiting the simple form of these maximization problems, we distinguish the following tractable cases: 1) linear (on the action variable) bounds on transition probabilities and polytopic action set $\Aimdp$, where the maximization problems are linear programs, 2) concave and convex, respectively, transition bounds and convex $\Aimdp$, where we have convex programs, and 3) convex and concave, respectively, transition bounds and polytopic $\Aimdp$, which amounts to convex maximization over polytopes. \emph{As a result, in these cases, control synthesis over caIMDPs, via value iteration, comes with guaranteed optimality and fast computation.} Moreover, we gain further interesting insights. For example, in cases 1 and 3, synthesizing over discrete-action IMDPs, where the action set consists of the vertices of $\Aimdp$, is sufficient for optimality. Overall, these results provide the necessary theory to synthesize optimal policies efficiently over {\caimdp}s. 

We showcase our results on a numerical example, which demonstrates that synthesizing over a \caimdp is not only the only way to optimality, but it is also computationally efficient, even when compared to synthesizing over discrete-action IMDPs. Finally, we briefly discuss about using {\caimdp}-abstractions for control synthesis for stochastic systems, touching upon constructing the abstraction and obtaining suboptimality\footnote{
The policy computed over the (ca)IMDP-abstraction, even if optimal for the abstraction, will generally be suboptimal for the original system. That is because the continuous state-space has been partitioned to a finite number of subsets, which represent the (ca)IMDP's states.
} bounds on the obtained policy.

\subsection{Related work}
While several existing works have focused on developing control and verification algorithms for finite-state and finite-action IMDPs  \cite{givan2000bounded,koutsoukos2006computational,nilim2005robust}, efficient algorithms for control of IMDPs with continuous action-spaces are missing. A first attempt to close this gap has been proposed in \cite{coogan2022abstraction}, where the authors present an algorithm to synthesize policies that maximize a reachability probability based on suboptimal heuristics and non-convex optimization. In contrast, here we derive an exact reformulation of value iteration for {\caimdp}s, which leads to tractable solutions based on linear or convex programming, in many cases of interest. {\caimdp}s are also  closely related to parametric MDPs, which are Markov decision processes with finite state-/action-spaces, but where the transition probabilities may depend on some parameters \citep{hahn2011probabilistic,lanotte2007parametric,cubuktepe2021convex}. However, a key difference is that, in parametric MDPs, the problem is to find parameter values that are fixed and independent of the state of the system. Instead, in {\caimdp}s we seek feedback control policies that depend on both time and state, thus requiring substantially different approaches. 

\subsection{Notation}
$\real$ (resp., $\real_{\geq0}$) stands for the set of real numbers (resp., non-negative reals). Given a polytope $\Aimdp\subset\real^{n}$,
its set of vertices is denoted by $\ver(\Aimdp)$. Given a discrete set $\Qimdp$, we denote its cardinality by $|\Qimdp|$. By $\ind$ we denote vectors, of appropriate dimension, of which all entries are equal to $1$. For a vector $x\in\real^n$, $x\succeq0$ ($x\preceq0$) denotes that all its components are non-negative (non-positive, resp.). 

\section{Continuous-Action Interval Markov Decision Processes (\NoCaseChange{\textsc{ca}}\textsc{IMDP}\NoCaseChange{s})}
\subsection{The basic elements of a \caimdp}
Continuous-action interval Markov decision processes ({\caimdp}s) generalize finite-state MDPs with interval-valued transition probabilities and continuous action-spaces. 

\begin{definition}[\caimdp] \label{def:imdp}
    A continuous-action interval Markov decision process (\caimdp) is a tuple $\I = (\Qimdp,\Aimdp,\Plow,\Pup,R)$, where
    \begin{itemize}
    	\setlength\itemsep{1mm}
    	\item $\Qimdp$ is the finite set of states,
    	\item $\Aimdp\subset \real^{n_{\Aimdp}}$ is the
    	set of actions,
        \item $\Plow: \Qimdp \times \Aimdp \times \Qimdp \rightarrow [0,1]$, where $\Plow(q,a,q')$ is the lower bound on the transition probability from state $q \in \Qimdp$ to state $q' \in \Qimdp$ under action $a \in \Aimdp$,
        \item $\Pup: \Qimdp \times \Aimdp \times \Qimdp \rightarrow [0,1]$, where $\Pup(q,a,q')$ is  the upper bound on the transition probability from state $q \in \Qimdp$ to state $q' \in \Qimdp$ under action $a \in \Aimdp$,
        \item $R:  \Qimdp \to \real_{\geq0}$ is a bounded state-dependent reward function.  
    \end{itemize}
\end{definition}
For all $q,q' \in \Qimdp$ and $a \in \Aimdp$, it holds that $\Plow(q,a,q') \leq \Pup(q,a,q')$ and $\sum_{q' \in \Qimdp} \Plow(q,a,q') \leq 1 \leq \sum_{q' \in \Qimdp} \Pup(q,a,q')$. Given a state $q\in \Qimdp$ and an action $a\in\Aimdp$, a transition probability distribution $p_{q,a}:\Qimdp\to[0,1]$ is called \textit{feasible} if $\check{P}(q,a,q')\leq p_{q,a}(q')\leq\hat{P}(q,a,q')$ for all $q'\in \Qimdp$. The set of all feasible distributions for the state-action pair $(q,a)$ is denoted by $\Gamma_{q,a}$. We define $\Gamma = \{\Gamma_{q,a}:(q,a)\in\Qimdp\times\Aimdp\}$ to be the set of all feasible distributions for all state-action pairs. 
    
A path of a \caimdp is a sequence of states and actions $\omega = (q_0,a_0),(q_1,a_1),\dots$, where $(q_i,a_i)\in\Qimdp\times\Aimdp$, and we denote $\omega(k) = (q_k,a_k)$ and $\omega^q(k)=q_k$, for $k=0,1,\dots$ We denote the set of all finite paths by $\Pathimdpfin$. For a path $\omega \in \Pathimdpfin$, we denote its last element by $\omega(end)$ (i.e., for an $(N+1)$-length path $\omega(end)=\omega(N)=(q_{N},a_{N})$). 

To describe the evolution of a {\caimdp}-path, we also need to introduce the notions of \emph{policy} and \emph{adversary}.
\begin{definition}[policy]
\label{def:policy}
    For a \caimdp, a policy $\str: \Pathimdpfin\times\Qimdp \rightarrow \Aimdp$ is
    a function that, given a finite path $\omega\in\Pathimdpfin$ and a state $q\in\Qimdp$, returns an action. The set of all policies is denoted by $\Str$. A policy is Markov, if the choice of action depends only on $q$ and on the path's length. 
\end{definition}

\begin{definition}[Adversary]
\label{def:adversary}
For a \caimdp, an adversary is a function $\adv: \Pathimdpfin \rightarrow \Gamma$. Given a finite path $\omega \in \Pathimdpfin$, it returns a feasible distribution $p_{q,a}\in\Gamma_{q,a}$, where $(q,a)=\omega(end)$. The set of all adversaries is denoted by $\Adv$. An adversary is Markov, if the choice of a feasible distribution depends only on $\omega(end)$ and the path's length. 
\end{definition}
Given a policy $\str$ and an adversary $\adv$, a {\caimdp}-path evolves as follows. At time $i$, given the finite path $(q_0,a_0),\dots,$ $(q_{i-1},a_{i-1})$ and the current state $q_i$, the policy $\str$ chooses the action $a_i$. Then, the adversary $\adv$, given the path $(q_0,a_0),\dots(q_{i},a_{i})$, chooses a feasible distribution $p_{q_i,a_i}\in\Gamma_{q_i,a_i}$. The next state of the path $q_{i+1}$ is sampled randomly from $p_{q_i,a_i}$. 

In words, a policy is a control strategy, that, at each time, decides the control action based on the history of the path. The adversary represents the environment: once the control action is taken, the adversary resolves the uncertainty stemming from the transition probability intervals, thus fixing the -still stochastic- environment for the current time step. Notice that, given a specific policy $\str$ and adversary $\adv$, the \caimdp collapses to a time-varying Markov chain. Thus, given $\str$, $\adv$, an initial state $q_0$, and a \emph{horizon} $N$, a probability measure is uniquely defined over $(N+1)$-length paths $\omega\in\Pathimdpfin$ \cite{bertsekas2004stochastic}.

\begin{remark}\label{rem:deterministic policies}
For ease of presentation, we constrain the definitions of policy and adversary to deterministic ones. In fact, they can be random, but, as Proposition \ref{prop:caimdps_value function} shows, the optimal rewards that we study are achieved by deterministic, Markov policies and adversaries; thus, it suffices to consider only deterministic ones.
\end{remark}
\subsection{Optimal policies, optimal rewards and value iteration}
In what follows, we implicitly assume that all mentioned $\max\min$ quantities (e.g., the ones in \eqref{eq:robust_control_problem} and \eqref{eq:original_vi2}) are well-defined. Numerous sets of assumption can be employed to impose this: e.g., as a consequence of Proposition \ref{prop:caimdps_value function} below, it suffices that $\Aimdp$ is compact and $\Plow,\Pup$ are continuous functions of the action variables $a$. 

Given $\str$, $\adv$, $N$, an initial state $q_0  \in \Qimdp$, and a factor $\gamma\geq 0$, we define the so-called \emph{expected cumulative reward}:
\begin{equation*}
    \mathcal{R}^{N}_{\str,\adv}(q_0)=\expec^{q_0}\left[\sum_{i=0}^N \gamma^i R(\omega^q(i)) \mid \str,\adv \right]
\end{equation*}
where the expectation is taken w.r.t. the probability measure over $(N+1)$-length paths in $\Pathimdpfin$ starting from state $q_0$. Notice that $R$, as defined in Definition \ref{def:imdp}, depends only on states, not actions.

In this work, as done commonly in the literature \citep{givan2000bounded}, we consider the \emph{robust control problem}: finding a policy that maximizes the expected cumulative reward generated by the worst possible adversary (and the policy itself). In other words, we study the following max-min problem\footnote{While this work focuses on the max-min problem, max-max, min-max, and min-min problems can be addressed similarly.}:
\begin{problem}[Robust Control]
Given a \caimdp $\I$, a factor $\gamma\geq 0$, a horizon $N$, and an initial state $q_0$, solve:
\begin{equation}\label{eq:robust_control_problem}
   \mathcal{R}^{N}_{\star}(q_0) \equiv \max_{\str\in\Str}\min_{\adv\in\Adv}\mathcal{R}^{N}_{\str,\adv}(q_0).
\end{equation}
\end{problem}
A policy $\str_\star$ that solves \eqref{eq:robust_control_problem}, i.e., 
\begin{equation*}
 \str_\star\in\argmax_{\str\in\Str}\min_{\adv\in\Adv}\mathcal{R}^{N}_{\str,\adv}(q_0),
\end{equation*} is called a \emph{(pessimistically)}\footnote{"Pessimistically", as it optimizes the reward generated by the \emph{worst} adversary. If we considered the \emph{best} adversary instead, which would amount to a max-max problem, the policy would be called \emph{optimistically} optimal.} \emph{optimal policy}. The optimal value $\mathcal{R}^{N}_{\star}(q_0)$ of \eqref{eq:robust_control_problem} is called \emph{(pessimistic) optimal (expected cumulative) reward}. Solving \eqref{eq:robust_control_problem} is computing an optimal policy $\str_\star$ and the optimal reward.

For MDPs and for IMDPs with discrete action-spaces, it is well-known that 
\eqref{eq:robust_control_problem} can be solved by an iterative scheme called \emph{value iteration}. The following proposition extends this to {\caimdp}s as well:
\begin{proposition}\label{prop:caimdps_value function}
Consider a \caimdp $\I$. 
For any $j=0,\dots,N$, the following holds:
\begin{equation}
    \mathcal{R}^{j}_\star(q_0) = V_{N-j}(q_0)
\end{equation}
where $V_{N-j}(q)$ is defined, for any $q\in\Qimdp$, through the following iteration:
\begin{subequations}\label{eq:original_vi}
\begin{align}
    &V_{N}(q) = R(q)\label{eq:original_vi1}\\
    &V_{k-1}(q) = R(q) + \gamma\max_{a\in \Aimdp}\min_{p\in\Gamma_{q,a}}\sum_{q'\in\Qimdp}p(q')V_{k}(q')\label{eq:original_vi2}
\end{align}
for $k=N,\dots,1$.
\end{subequations}
\end{proposition}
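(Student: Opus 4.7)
The plan is to prove the identity by backward induction on the horizon index $j\in\{0,1,\dots,N\}$, adapting the classical dynamic programming argument used for MDPs and discrete-action IMDPs. The two structural novelties here are that the action set is continuous and that the adversary selects from a polytope $\Gamma_{q,a}$ of feasible distributions; neither disturbs the usual recursion once the implicit well-definedness assumption (that the inner $\max\min$ in \eqref{eq:original_vi2} and the outer $\max\min$ in \eqref{eq:robust_control_problem} are attained) is in force.

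For the base case $j=0$ no transitions occur, so $\mathcal{R}^0_{\str,\adv}(q_0)=R(q_0)$ for every $\str,\adv$, which equals $V_N(q_0)$ by \eqref{eq:original_vi1}. For the inductive step, fix $j\geq 1$ and assume $\mathcal{R}^{j-1}_\star(q)=V_{N-j+1}(q)$ for every $q\in\Qimdp$. The key tool is the one-step decomposition obtained from the tower property of conditional expectation:
\begin{equation*}
\mathcal{R}^{j}_{\str,\adv}(q_0) = R(q_0) + \gamma\sum_{q'\in\Qimdp} p_{q_0,a_0}(q')\,\mathcal{R}^{j-1}_{\str^{q'},\adv^{q'}}(q'),
\end{equation*}
where $a_0$ is the action that $\str$ chooses at $q_0$, $p_{q_0,a_0}\in\Gamma_{q_0,a_0}$ is the feasible distribution that $\adv$ picks at the path $(q_0,a_0)$, and $\str^{q'},\adv^{q'}$ denote the continuations of $\str,\adv$ on paths that proceed from $q'$ at time $1$.

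The induction is then closed by proving two matching inequalities. For $\mathcal{R}^{j}_\star(q_0)\leq V_{N-j}(q_0)$: given any $\str$, let $\adv$ pick at the first step a distribution $p^\star\in\Gamma_{q_0,a_0}$ attaining the inner minimum in \eqref{eq:original_vi2} relative to $V_{N-j+1}$, and from each reachable successor $q'$ use an inductively-optimal adversary; combining the decomposition with the induction hypothesis yields $\mathcal{R}^{j}_{\str,\adv}(q_0)\leq R(q_0)+\gamma\min_{p\in\Gamma_{q_0,a_0}}\sum_{q'}p(q')V_{N-j+1}(q')\leq V_{N-j}(q_0)$. For the reverse inequality, construct $\str_\star$ that picks a maximizer $a_0^\star$ of \eqref{eq:original_vi2} at $q_0$ and continues with an inductively-optimal policy from each $q'$; against any $\adv$, the decomposition together with the induction hypothesis yields $\mathcal{R}^{j}_{\str_\star,\adv}(q_0)\geq V_{N-j}(q_0)$.

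The main technical obstacle is the careful handling of the shifted strategies $\str^{q'},\adv^{q'}$ and the interchange of $\max$, $\min$, and the sum over successor states. Existence of the first-step maximizer $a_0^\star$ and of the adversary's minimizer $p^\star$ is covered by the implicit well-definedness assumption; the interchange is legitimate precisely because, thanks to the Markov structure, the continuation problems from distinct successors $q'$ are decoupled, so state-wise optimal continuations can be assembled into a global Markov deterministic optimizer for both players. This construction also retroactively justifies the restriction of policies and adversaries to deterministic functions stated in Remark~\ref{rem:deterministic policies}.
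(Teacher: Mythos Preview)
Your proof is correct but follows a different route from the paper's. The paper's sketch does not carry out the induction explicitly: instead it invokes the transformation of \cite{nilim2005robust} to recast the robust control problem as a turn-based zero-sum stochastic game, appeals to \cite{nowak1984zero} to obtain a Bellman recursion in which the outer maximization ranges over probability measures on $\Aimdp$, and then argues separately that Dirac measures attain the maximum, thereby collapsing the recursion to \eqref{eq:original_vi2}. Your argument is more elementary and self-contained: the direct backward induction with the one-step tower decomposition and the two-sided sandwich avoids any external machinery, and the explicit construction of $\str_\star$ and of the adversary's response makes the existence of Markov deterministic optimizers transparent. The price is that your proof, as written, operates entirely within the class of deterministic policies and adversaries fixed in Definitions~\ref{def:policy}--\ref{def:adversary}; your closing sentence that this ``retroactively justifies'' Remark~\ref{rem:deterministic policies} slightly overstates what you have shown, since passing from deterministic to randomized strategies requires one more (standard) averaging step that the paper's detour through measures on $\Aimdp$ handles automatically. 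For the proposition as stated, however, your argument is complete.
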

\begin{proof}
See Section \ref{sec:appendix_proof_value_iteration}.
\end{proof}
The process of solving \eqref{eq:original_vi} for all iterations is called \emph{value iteration} and the obtained function $V_{0}(\cdot)$ is called \emph{value function}. A direct corollary of Proposition \ref{prop:caimdps_value function}, is that there exist Markov policies (and adversaries) achieving the optimal reward, defined as follows:
\begin{corollary}[to Proposition \eqref{prop:caimdps_value function}]\label{cor:markov_policies}
    Any Markov policy satisfying:
    $$\str_{\star}(q,N-k-1) \in \argmax_{a\in\Aimdp}\min_{p\in\Gamma_{q,a}}\sum_{q'\in\Qimdp}p(q')V_{k}(q')$$ is optimal, where $\str_{\star}(q,i)$ denotes the action taken if the current state is $q$ and the current time is $i$.
\end{corollary}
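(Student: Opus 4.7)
The plan is to deduce the corollary directly from the backward induction argument underlying Proposition \ref{prop:caimdps_value function}. The key point is that, although Proposition \ref{prop:caimdps_value function} is stated as an equality of optimal values, its proof (by backward induction on the horizon) implicitly constructs an optimal Markov policy, and the policy described in the corollary is precisely such a construction.

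More concretely, for any Markov policy $\str$, define the function
\begin{equation*}
V^{\str}_{k}(q) = \min_{\adv\in\Adv}\expec^{q}\Bigl[\sum_{i=0}^{N-k}\gamma^{i}R(\omega^{q}(i)) \,\Big|\, \str,\adv\Bigr],
\end{equation*}
i.e., the pessimistic expected cumulative reward over the residual horizon of length $N-k$, starting at state $q$, when the policy's action at time $N-k-j$ is $\str(q,N-k-j)$. A standard dynamic-programming argument (identical in structure to the proof of Proposition \ref{prop:caimdps_value function}, but with the outer max fixed to $\str$) shows that $V^{\str}_{k}$ obeys the recursion $V^{\str}_{N}(q) = R(q)$ and
\begin{equation*}
V^{\str}_{k-1}(q) = R(q) + \gamma \min_{p\in\Gamma_{q,\str(q,N-k-1)}}\sum_{q'\in\Qimdp} p(q') V^{\str}_{k}(q').
\end{equation*}

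I would then perform a backward induction on $k$ to show that, for the Markov policy $\str_{\star}$ defined in the corollary, $V^{\str_{\star}}_{k}(q) = V_{k}(q)$ for all $q\in\Qimdp$ and all $k=0,\dots,N$. The base case $k=N$ is immediate since both sides equal $R(q)$. For the inductive step, plugging the induction hypothesis into the recursion for $V^{\str_{\star}}_{k-1}$ and using that $\str_{\star}(q,N-k-1)$ attains the outer maximum in \eqref{eq:original_vi2} yields
\begin{equation*}
V^{\str_{\star}}_{k-1}(q) = R(q) + \gamma\min_{p\in\Gamma_{q,\str_{\star}(q,N-k-1)}}\sum_{q'\in\Qimdp}p(q')V_{k}(q') = V_{k-1}(q),
\end{equation*}
exactly matching \eqref{eq:original_vi2}. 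Taking $k=0$ and invoking Proposition \ref{prop:caimdps_value function}, we conclude $\min_{\adv}\mathcal{R}^{N}_{\str_{\star},\adv}(q_{0}) = V^{\str_{\star}}_{0}(q_{0}) = V_{0}(q_{0}) = \mathcal{R}^{N}_{\star}(q_{0})$, so $\str_{\star}$ is optimal.

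There is really no hard step here; the only point requiring care is making sure the argmax is actually attained (so that $\str_{\star}$ is well-defined as a deterministic Markov policy). This is guaranteed by the blanket assumption stated just before Proposition \ref{prop:caimdps_value function} that all the $\max\min$ quantities are well-defined, e.g., under the sufficient condition of compact $\Aimdp$ and continuous $\Plow,\Pup$, in which case the inner minimum over the compact polytope $\Gamma_{q,a}$ and the outer maximum over the compact set $\Aimdp$ are both attained.
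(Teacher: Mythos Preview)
The paper gives no separate proof here—it is presented as a ``direct corollary'' of Proposition~\ref{prop:caimdps_value function}—and your backward-induction argument, showing that the pessimistic value $V^{\str_\star}_k$ of the greedy Markov policy coincides with $V_k$ at every stage, is exactly the standard justification one would supply. One bookkeeping caveat: for your one-step recursion for $V^{\str}_{k-1}$ to be a genuine Bellman decomposition, $V^{\str}_k$ must be read as the pessimistic value-to-go from time $k$ in the horizon-$N$ problem, so the first-step action should be $\str(q,k-1)$ rather than $\str(q,N-k-1)$; this looks like a time-index slip inherited from the corollary's own statement, and with that corrected your induction goes through verbatim.
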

\begin{remark}
Note that, while in this work we only consider cumulative rewards, similarly to IMDPs with discrete action-spaces, our results can easily be extended to more general properties, such as bounded-time fragments of PCTL \cite{lahijanian2015formal} or LTL \cite{jackson2021strategy}. Infinite-horizon (unbounded-time) extensions are also possible, but particular care is needed to guarantee convergence of the value iteration of Proposition \ref{prop:caimdps_value function} due to the non-finiteness of $\Aimdp$.
\end{remark}

\section{Efficient formulations of value iteration over \NoCaseChange{ca}IMDP\NoCaseChange{s}}
By Proposition \ref{prop:caimdps_value function} and Corollary \ref{cor:markov_policies}, we see that to compute the optimal reward and an optimal policy it suffices to solve 
\eqref{eq:original_vi} for all iterations $k=N,\dots,1$. However, solving \eqref{eq:original_vi2} is not straightforward, as it includes solving (for each $q$) the max-min optimization problem $\max_{a \in A} \min_{p \in \Gamma_{q,a}} p^TV_{k}$, where we, abusively, denote $\allowdisplaybreaks{V_{k} = \begin{bmatrix}V_{k}(q_1) &\dots &V_{k}(q_{|\Qimdp|})\end{bmatrix}^\top}$, $p=\begin{bmatrix}p(q_1) &\dots &p(q_{|\Qimdp|})\end{bmatrix}^\top$. This max-min problem can be written as:
\begin{equation}\label{original_opti_problem}\tag{Mm}
\begin{aligned}
	\max_{a \in \Aimdp} \min_{p \in \real^{|\Qimdp|}}  \quad & p^\top V_{k}\\
	\mathrm{s.t.:} \quad & \check{P}(q,a,q_i)-p_i \leq 0, \text{ }i=1,2,\dots,|\Qimdp|,\\
				  & -\hat{P}(q,a,q_i)+p_i \leq 0, \text{ }i=1,2,\dots,|\Qimdp|,\\
				  & \sum_{i=1}^{|\Qimdp|}p_i -1 = 0\\
\end{aligned}
\end{equation}
where $p_i$ denotes the $i$-th component of vector $p$, and the constraints
guarantee that $p$ is a probability distribution (equality constraint) and belongs in $\Gamma_{q,a}$ (inequality constraints). 

In this section, we decompose problem \eqref{original_opti_problem} to a number of simpler problems, and identify special cases of $\Aimdp$, $\Plow(\cdot,a,\cdot)$ and $\Pup(\cdot, a, \cdot)$ in which problem \eqref{original_opti_problem} can be solved \emph{efficiently}. 
First, we state and prove this paper's main result (Theorem \ref{thm:main_theorem}): that problem \eqref{original_opti_problem} is equivalent to solving the $|\Qimdp|$ simpler maximization problems in \eqref{eq:final_LP}. This gives rise to Algorithm \ref{alg:algorithm}, which performs value iteration over {\caimdp}s, where, instead of solving \eqref{original_opti_problem}, it solves \eqref{eq:final_LP}. Then, the simple form of the maximization problems in \eqref{eq:final_LP} allows us to identify special cases in which they, and thus value iteration, are tractable. Furthermore, we show that when $\Aimdp$ is polytopic and $\Plow(\cdot,a,\cdot)$ and $\Pup(\cdot, a, \cdot)$ are either linear or convex and concave, respectively, performing synthesis over a discrete-action IMDP with its action set being $\ver(\Aimdp)$ is sufficient for optimality.

\subsection{Decomposing the max-min problem to $|\Qimdp|$ max problems}
To prove the main result, first we transform the original max-min problem \eqref{original_opti_problem} to a maximization problem, by employing duality:
\begin{proposition}\label{prop:dual_problem}
Consider the optimization problem:
\begin{equation}\label{eq:dual_opti_problem}
	\begin{aligned}
		\max_{a, \lambda_L, \lambda_U, \nu} \quad &\sum_{i=1}^{|\Qimdp|}\lambda_{L_i}\check{P}(q,a,q_i) - \sum_{i=1}^{|\Qimdp|}\lambda_{U_i}\hat{P}(q,a,q_i) -\nu\\
		\mathrm{s.t.:} \quad & a\in\Aimdp, \text{ }\lambda_L \succeq 0, \text{ }\lambda_U \succeq 0\\
		& V_{k}-\lambda_L + \lambda_U + \nu\ind=0
	\end{aligned}
\end{equation}
where $\lambda_L\in\real^{|\Qimdp|}, \lambda_U\in\real^{|\Qimdp|}$, $\nu\in\real$ and $\lambda_{L_i},\lambda_{U_i}$ denote the $i$-th component of $\lambda_L$ and $\lambda_U$, respectively. If $(a_\star,\lambda_{L_\star}, \lambda_{U_\star}, \nu_\star)$ solves \eqref{eq:dual_opti_problem}, then it also solves \eqref{original_opti_problem}. Moreover, the optimal values of \eqref{original_opti_problem} and \eqref{eq:dual_opti_problem} coincide.
\end{proposition}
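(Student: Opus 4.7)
The plan is to recognize that, for any fixed action $a \in \Aimdp$, the inner problem in \eqref{original_opti_problem} is a linear program in the variable $p$, and that \eqref{eq:dual_opti_problem} is (up to an outer maximization over $a$) precisely its Lagrange dual. I would therefore apply LP duality followed by a routine interchange of two successive maxima.

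First, I would form the Lagrangian of the inner LP using multipliers $\lambda_L \succeq 0$ for the lower-bound constraints, $\lambda_U \succeq 0$ for the upper-bound constraints, and $\nu \in \real$ for the equality constraint. Collecting the terms involving $p$, the Lagrangian becomes
\[
L(p,\lambda_L,\lambda_U,\nu) = p^\top(V_k - \lambda_L + \lambda_U + \nu \ind) + \sum_i \lambda_{L_i}\Plow(q,a,q_i) - \sum_i \lambda_{U_i}\Pup(q,a,q_i) - \nu.
\]
Since $p$ ranges over all of $\real^{|\Qimdp|}$, its minimum over $p$ is $-\infty$ unless the coefficient vector vanishes, i.e., $V_k - \lambda_L + \lambda_U + \nu \ind = 0$. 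Imposing this as a dual feasibility constraint recovers the objective and constraints of \eqref{eq:dual_opti_problem} (without yet the outer max over $a$).

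Next, I would invoke strong duality. The primal inner LP is feasible — the conditions on $\Plow,\Pup$ in Definition~\ref{def:imdp} guarantee that $\Gamma_{q,a}$ is nonempty — and its optimum is finite because the feasible set is contained in the probability simplex and $V_k$ is bounded. By the strong-duality theorem for linear programs, the primal and dual optimal values coincide for every fixed $a$. Thus
\[
\min_{p \in \Gamma_{q,a}} p^\top V_k \;=\; \max_{\lambda_L \succeq 0,\ \lambda_U \succeq 0,\ \nu \,:\, V_k - \lambda_L + \lambda_U + \nu \ind = 0}\ \Big(\sum_i \lambda_{L_i}\Plow(q,a,q_i) - \sum_i \lambda_{U_i}\Pup(q,a,q_i) - \nu\Big).
\]
Finally, I would take the outer $\max_{a \in \Aimdp}$ of both sides and observe that, because the right-hand side is itself a maximum, the two maxima collapse into a single joint maximization over $(a,\lambda_L,\lambda_U,\nu)$, yielding exactly problem \eqref{eq:dual_opti_problem}. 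This gives the equality of optimal values, and furthermore shows that if $(a_\star,\lambda_{L_\star},\lambda_{U_\star},\nu_\star)$ solves \eqref{eq:dual_opti_problem}, then $a_\star$ attains the outer maximum in \eqref{original_opti_problem}, with the corresponding optimal $p_\star$ recovered from the primal inner LP at $a=a_\star$.

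The proof is essentially routine once the LP-duality perspective is adopted; the only subtlety is the careful justification of strong duality (feasibility and boundedness of the inner LP) and the observation that eliminating $p$ from the Lagrangian turns the dual variables' feasibility condition into the equality $V_k - \lambda_L + \lambda_U + \nu\ind = 0$. No concavity or convexity properties of $\Plow(q,\cdot,q_i),\Pup(q,\cdot,q_i)$ in $a$ are needed for this reformulation — those will matter only later, when asking when \eqref{eq:dual_opti_problem} can be solved efficiently.
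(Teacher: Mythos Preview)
Your proposal is correct and follows essentially the same route as the paper: form the Lagrange dual of the inner LP, eliminate $p$ via the condition $V_k-\lambda_L+\lambda_U+\nu\ind=0$, invoke strong LP duality, and then merge the outer $\max_{a\in\Aimdp}$ with the dual maximization. The only presentational difference is that the paper isolates the ``two successive maxima collapse into one, and the joint maximizer's $a$-component solves the original max--min'' step into a standalone lemma before specializing to the problem at hand, whereas you handle it inline; the substance is the same.
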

\begin{proof}
See Section \ref{sec:appendix_proof_duality_prop}.
\end{proof}
The variables $\lambda_L$, $\lambda_U$ and $\nu$ are the so-called \emph{Lagrange multipliers}. Generally, even though just a max problem instead of a max-min problem, optimization problem \eqref{eq:dual_opti_problem} is not easy to work with directly. That is because of the product terms $\lambda_{L_i}\check{P}(q,a,q_i)$ and $\lambda_{U_i}\hat{P}(q,a,q_i)$ (e.g., even when $\Plow(\cdot,a,\cdot)$ and $\Pup(\cdot, a, \cdot)$ are linear on $a$, problem \eqref{eq:dual_opti_problem} is, generally, nonconvex). Nonetheless, by working around its special structure, we are able to dissect it into $|\Qimdp|$ simpler max problems, which constitutes this work's main result:
\begin{theorem}\label{thm:main_theorem}
Assume, without loss of generality, that $V_{k}(q_i)\geq V_{k}(q_{i+1})$ for all $i=1,2,\dots,|\Qimdp|-1$, i.e., that $V_{k}$ is in descending order. Consider the following optimization problem:
\begin{equation}\label{eq:final_LP} \tag{MP}
	\begin{aligned}
		\max_{j=1,2,\dots,|\Qimdp|}\max_{a} \quad &\sum_{i=1}^{j-1}\Big(V_{k}(q_i)-V_{k}(q_{j})\Big)\check{P}(q,a,q_i) + \sum_{i=j+1}^{|\Qimdp|}\Big(V_{k}(q_i)-V_{k}(q_{j})\Big)\hat{P}(q,a,q_i) +V_{k}(q_{j})\\
		\mathrm{s.t.:} \quad &a\in\Aimdp
	\end{aligned}
\end{equation}
If $a_\star$ solves \eqref{eq:final_LP}, then it solves \eqref{original_opti_problem}. Moreover, the optimal values of \eqref{eq:final_LP} and \eqref{original_opti_problem} coincide.
\end{theorem}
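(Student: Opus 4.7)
The plan is to proceed from the dualized reformulation \eqref{eq:dual_opti_problem} provided by Proposition~\ref{prop:dual_problem}, and eliminate the Lagrange multipliers in two stages: first $\lambda_L, \lambda_U$, then $\nu$, to arrive at a problem where the only remaining decision variable is $a$, together with a discrete choice over $j\in\{1,\dots,|\Qimdp|\}$.

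First I would fix $a\in\Aimdp$ and $\nu\in\real$ and use the equality constraint $V_k-\lambda_L+\lambda_U+\nu\ind=0$, which decouples component-wise as $\lambda_{L_i}-\lambda_{U_i}=V_k(q_i)+\nu$. Each coordinate of the objective, namely $\lambda_{L_i}\check{P}(q,a,q_i)-\lambda_{U_i}\hat{P}(q,a,q_i)$, can then be maximized independently over $\lambda_{L_i},\lambda_{U_i}\ge0$. Since $\hat{P}(q,a,q_i)\ge\check{P}(q,a,q_i)$, shifting both multipliers upward by the same amount weakly decreases this contribution, so the optimum is attained with $\min(\lambda_{L_i},\lambda_{U_i})=0$. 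This yields the coordinate contribution $(V_k(q_i)+\nu)\check{P}(q,a,q_i)$ when $V_k(q_i)+\nu\ge0$, and $(V_k(q_i)+\nu)\hat{P}(q,a,q_i)$ otherwise.

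Next I would treat the resulting maximization over $(a,\nu)$ as a maximization over $\nu$ at fixed $a$ of a continuous, piecewise-linear function whose breakpoints, under the descending ordering of $V_k$, are exactly $\nu=-V_k(q_j)$ for $j=1,\dots,|\Qimdp|$. On the segment where exactly the first $j$ indices satisfy $V_k(q_i)+\nu\ge0$, the slope in $\nu$ is $\sum_{i=1}^{j}\check{P}(q,a,q_i)+\sum_{i=j+1}^{|\Qimdp|}\hat{P}(q,a,q_i)-1$. Because $\check{P}\le\hat{P}$, this slope is non-increasing as $j$ grows, so the objective is concave in $\nu$; furthermore, $\sum_{q'}\check{P}(q,a,q')\le 1\le\sum_{q'}\hat{P}(q,a,q')$ ensures that the slope is nonnegative for very small $\nu$ and nonpositive for very large $\nu$. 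Hence the supremum over $\nu$ is finite and attained at some breakpoint $\nu_\star=-V_k(q_{j_\star})$.

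Finally, substituting $\nu_\star=-V_k(q_{j_\star})$ into the reduced objective makes the $i=j_\star$ term vanish (its prefactor is $V_k(q_{j_\star})-V_k(q_{j_\star})=0$), and the remaining terms together with $-\nu_\star=V_k(q_{j_\star})$ reassemble into exactly the objective of \eqref{eq:final_LP}; the outer optimization over $\nu$ thereby collapses into the outer maximization over $j\in\{1,\dots,|\Qimdp|\}$. Combining this chain of equalities with Proposition~\ref{prop:dual_problem} yields equality of the optimal values of \eqref{original_opti_problem} and \eqref{eq:final_LP}, and an optimizer $(j_\star,a_\star)$ of \eqref{eq:final_LP} reconstructs, via the above derivation, a primal-dual optimizer of \eqref{eq:dual_opti_problem} and hence a maximizer of \eqref{original_opti_problem}. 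The main technical hurdle will be the concavity/boundedness argument for the inner maximization in $\nu$, together with carefully handling ties in $V_k$ (where adjacent breakpoints coincide but give the same objective value) and the boundary cases $j=1$ and $j=|\Qimdp|$ in which one of the sums in \eqref{eq:final_LP} is empty.
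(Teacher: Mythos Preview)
Your proposal is correct and follows essentially the same route as the paper: start from the dual reformulation of Proposition~\ref{prop:dual_problem}, eliminate the $\lambda$-multipliers coordinate-wise to obtain a piecewise-linear function of $(a,\nu)$, then show that the optimal $\nu$ lies at one of the $|\Qimdp|$ breakpoints $-V_k(q_j)$ and substitute. The only notable difference is that the paper handles the $\nu$-optimization by an explicit split into $|\Qimdp|+1$ intervals followed by deduplication, whereas your concavity-in-$\nu$ argument (monotone slopes via $\check P\le\hat P$, combined with $\sum\check P\le 1\le\sum\hat P$ for the sign change) reaches the same conclusion more directly and avoids the case enumeration.
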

\begin{proof}
See Section \ref{sec:appendix_proof_main}.
\end{proof}

By employing Theorem \ref{thm:main_theorem}, we can devise an algorithm (Algorithm \ref{alg:algorithm}) to solve value iteration over {\caimdp}s, where, instead of solving the stringent \eqref{original_opti_problem}, we solve \eqref{eq:final_LP}, which can be solved efficiently in many cases of interest, as it is revealed in the next section. Observe how in each iteration of value iteration, we sort the value-function vector in decreasing order, thus imposing the assumption of Theorem \ref{thm:main_theorem} (hence why it is written that the assumption is without loss of generality).
\begin{algorithm}
\caption{Value Iteration on {\caimdp}s}\label{alg:algorithm}
    \begin{algorithmic}[1]
        \FOR{$q\in\Qimdp$} 
        \STATE $V_N(q)\gets R(q)$
        \ENDFOR
        \FOR{$k=N,\dots,1$}
        \STATE  $[V_{sorted},indices_{sorted}] = sort(V_k)$ \COMMENT{sorts $V_k$ in descending order and stores it in $V_{sorted}$. $indices_{sorted}$ describes the arrangement of the elements of $V_k$ into $V_{sorted}$, i.e. $V_k = V_{sorted}(indices_{sorted})$.}
        \FOR{$q$ in $indices_{sorted}$}
        \STATE $V_{k-1}(q) = R(q) + \gamma$*Solve(MP) 	\COMMENT{where in \eqref{eq:final_LP} we replace $V_k$ with $V_{sorted}$.} 
        \ENDFOR
        \ENDFOR
	\end{algorithmic}
\end{algorithm}

It remains to deduce when is it easy to perform line 7 of Algorithm \ref{alg:algorithm}, i.e. to solve \eqref{eq:final_LP}. Notice how \eqref{eq:final_LP} is in a much simpler form to analyze than the primal max-min problem \eqref{original_opti_problem}: a) they are just max problems, b) their only decision variable is the action $a$ and the constraint set is $\Aimdp$, and c) the functions $\Plow(q,a,q_i)$ and $\Pup(q, a, q_i)$ appear linearly, which facilitates convexity arguments. It is this simple form that allows us to deduce when they can be solved efficiently, in the following section.
\begin{remark}\label{rem:similarity to givan}
Notice that the derived maximization problems \eqref{eq:final_LP} and value-iteration algorithm share similarities with the algorithm proposed in \cite{givan2000bounded} to solve value iteration over discrete-action IMDPs. First, they both sort the value-function vector in each iteration. Furthermore, in both algorithms, given a specific action $a$, a combination of additions of lower and upper bounds on transition probabilities is found, which is optimal w.r.t. the ordered value-function vector. The number of possible such combinations is $|\Qimdp|$ in both cases. Nevertheless, contrary to here, in \cite{givan2000bounded} there is a stopping criterion, based on which the algorithm may terminate before computing all $|\Qimdp|$ combinations. The existence of such a stopping criterion for problems \eqref{eq:final_LP} (i.e., for the continuous-action case) is an open question, and a potential affirmative answer would significantly reduce computations.
\end{remark}

\subsection{When can value iteration for {\caimdp}s be solved efficiently?}\label{sec:cases}
By inspecting \eqref{eq:final_LP}, we are able to identify cases where value iteration for {\caimdp}s can be solved efficiently.

\textit{\textbf{The linear case.}} In the case of polytopic $\Aimdp$, and linear $\Plow(\cdot,a,\cdot)$, $\Pup(\cdot, a, \cdot)$, solving problem \eqref{eq:final_LP} amounts to solving $|\Qimdp|$ linear programs (LPs):

\begin{corollary}[to Theorem \ref{thm:main_theorem}]\label{cor:LP1}
If $\Aimdp$ is a polytope and both $\Plow(\cdot,a,\cdot)$ and $\Pup(\cdot, a, \cdot)$ are linear on $a$, then the maximization problems in \eqref{eq:final_LP} are LPs. 
\end{corollary}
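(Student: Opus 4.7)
The plan is to observe that each of the $|\Qimdp|$ inner maximizations in \eqref{eq:final_LP} already has the shape of an LP once we examine its objective and constraint set carefully, so the proof reduces to checking linearity of the objective in $a$ and polyhedral structure of the feasible set.

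First, I would fix an index $j\in\{1,\dots,|\Qimdp|\}$ and look at the $j$-th inner maximization in \eqref{eq:final_LP}. Because the sorting step of Algorithm \ref{alg:algorithm} has already been performed, the scalars $V_k(q_i)-V_k(q_j)$ are constants (independent of the decision variable $a$); the only $a$-dependence in the objective sits inside the transition bounds $\check{P}(q,a,q_i)$ and $\hat{P}(q,a,q_i)$. Since, by hypothesis, $\check{P}(q,\cdot,q_i)$ and $\hat{P}(q,\cdot,q_i)$ are linear (equivalently, affine) functions of $a$ for every $q_i$, any scalar multiple and any finite sum of such functions is again affine in $a$. Adding the constant $V_k(q_j)$ preserves affineness, so the whole objective is an affine function of $a$.

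Second, I would handle the constraint set. By assumption $\Aimdp$ is a polytope, hence it admits a finite representation $\{a\in\real^{n_\Aimdp}: H a \preceq h\}$ for some matrix $H$ and vector $h$. The feasible region of the inner problem is therefore a polyhedron defined by finitely many linear inequalities. Combining this with the affine objective from the previous paragraph shows that the $j$-th inner problem is a maximization of a linear (plus constant) objective over a polytope, which is by definition a linear program. Running this argument for each $j\in\{1,\dots,|\Qimdp|\}$ yields the claim.

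I do not anticipate any genuine obstacle: the corollary is essentially a structural observation about \eqref{eq:final_LP}, and its content lies in the reformulation already carried out in Theorem \ref{thm:main_theorem} rather than in any new analytic step. The only mild subtlety is that one must explicitly note that the coefficients $V_k(q_i)-V_k(q_j)$ are data (treated as constants at iteration $k$), not variables, so that the products $\bigl(V_k(q_i)-V_k(q_j)\bigr)\check{P}(q,a,q_i)$ and $\bigl(V_k(q_i)-V_k(q_j)\bigr)\hat{P}(q,a,q_i)$ remain affine in $a$; this is immediate from the order in which Algorithm \ref{alg:algorithm} processes the iterates, so the proof is short.
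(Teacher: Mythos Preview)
Your proposal is correct and matches the paper's treatment: the paper states Corollary~\ref{cor:LP1} without an explicit proof, treating it as an immediate structural observation on \eqref{eq:final_LP} once Theorem~\ref{thm:main_theorem} is in hand. Your write-up spells out exactly the reasoning the paper leaves implicit---that the coefficients $V_k(q_i)-V_k(q_j)$ are constants, the objective is therefore affine in $a$, and the feasible set $\Aimdp$ is polyhedral---so there is nothing to add or change.
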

Moreover, notice that an optimal $a_\star$ lies on the vertices of $\Aimdp$, since the problems in \eqref{eq:final_LP} are LPs with $\Aimdp$ as the constraint set. Thus, another corollary of Theorem \ref{thm:main_theorem} is that, instead of solving \eqref{eq:final_LP} in line 7 of Algorithm \ref{alg:algorithm}, we could solve the original max-min problem \eqref{original_opti_problem} only for $a\in\ver(A)$, which amounts to solving $|\ver(\Aimdp)|$ LPs:
\begin{corollary}[to Theorem \ref{thm:main_theorem}]\label{cor:LP2}
If $\Aimdp$ is a polytope and both $\Plow(\cdot,a,\cdot)$ and $\Pup(\cdot, a, \cdot)$ are linear on $a$, solving the max-min problem \eqref{original_opti_problem} amounts to solving the following $|\ver(\Aimdp)|$ minimization LPs: 
\begin{equation}\label{eq:vertices}
	\begin{aligned}
		\max_{a\in\ver(\Aimdp)}\min_{p \in \real^{|\Qimdp|}}  \quad & p^\top V_{k}\\
	\mathrm{s.t.:} \quad &\check{P}(q,a,q_i)-p_i \leq 0, \text{ }i=1,2,\dots,|\Qimdp|,\\
				  &-\hat{P}(q,a,q_i)+p_i \leq 0, \text{ }i=1,2,\dots,|\Qimdp|,\\
				  &\sum_{i=1}^{|\Qimdp|}p_i -1 = 0\\
	\end{aligned}
\end{equation}
\end{corollary}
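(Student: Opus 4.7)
The plan is to build this corollary on top of Theorem~\ref{thm:main_theorem} and Corollary~\ref{cor:LP1}. By Theorem~\ref{thm:main_theorem}, the optimal value of the primal max-min problem \eqref{original_opti_problem} equals the optimal value of the collection of maximization problems \eqref{eq:final_LP}, and any optimizer of \eqref{eq:final_LP} is also an optimizer of \eqref{original_opti_problem}. Thus the task reduces to showing that, under the linearity assumption, we may restrict the outer maximization to the vertex set $\ver(\Aimdp)$, and that for each fixed $a\in\ver(\Aimdp)$ the inner minimization is indeed an LP.

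First, I would observe that, by Corollary~\ref{cor:LP1}, each of the $|\Qimdp|$ inner maximization problems in \eqref{eq:final_LP} is an LP in the variable $a$ over the polytope $\Aimdp$ (its objective is an affine combination of $\Plow(q,a,q_i)$, $\Pup(q,a,q_i)$ and the constant $V_k(q_j)$, hence affine in $a$). A standard result for linear programming over a bounded polytope guarantees that an optimum is attained at some vertex; if $\Aimdp$ is unbounded, the linearity and well-definedness assumption on the max-min problem implicitly rules out objectives that are unbounded above, in which case we can argue via the resolution theorem for polyhedra (or simply restrict to the relevant face containing the optimum). Consequently there exists $a_\star\in\ver(\Aimdp)$ achieving the optimum in \eqref{eq:final_LP}, and by Theorem~\ref{thm:main_theorem} the same $a_\star$ solves \eqref{original_opti_problem}. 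This yields the key identity
\begin{equation*}
\max_{a\in\Aimdp}\min_{p\in\Gamma_{q,a}} p^\top V_k \;=\; \max_{a\in\ver(\Aimdp)}\min_{p\in\Gamma_{q,a}} p^\top V_k .
\end{equation*}

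Next, I would verify that, for each fixed vertex $a\in\ver(\Aimdp)$, the inner problem $\min_{p\in\Gamma_{q,a}} p^\top V_k$ is indeed the LP displayed in \eqref{eq:vertices}: the objective is linear in $p$, and the constraints are the bound inequalities $\check P(q,a,q_i)\le p_i\le \hat P(q,a,q_i)$ together with the affine equality $\sum_i p_i = 1$, all of which are constants in $p$ once $a$ has been fixed (whether or not $\Plow,\Pup$ are linear in $a$). Hence solving the outer maximum over the finite set $\ver(\Aimdp)$ reduces to solving $|\ver(\Aimdp)|$ independent LPs, which is exactly the claim.

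The argument is short, and the only subtle point is handling the case of an unbounded polytope $\Aimdp$; for a bounded polytope the statement is an immediate consequence of Theorem~\ref{thm:main_theorem} combined with the vertex-attainment of linear programs. Since the blanket well-definedness assumption stated at the beginning of Section~2.2 implicitly ensures that \eqref{eq:final_LP} is bounded above, no additional care is required beyond invoking that assumption.
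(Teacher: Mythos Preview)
Your argument is correct and mirrors the paper's own reasoning, which is contained in the sentence immediately preceding the corollary: since the problems in \eqref{eq:final_LP} are LPs over the polytope $\Aimdp$, an optimal $a_\star$ lies at a vertex, so restricting the outer maximization in \eqref{original_opti_problem} to $\ver(\Aimdp)$ loses nothing. Your added remarks about the inner problem being an LP for fixed $a$ and about unbounded polytopes are fine elaborations, but the core approach is the same.
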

\begin{remark}\label{rem:discreteIMDP_linear}
Equivalently, Corollary \ref{cor:LP2} guarantees that it is sufficient to construct a discrete-action IMDP with its action set being $\ver(\Aimdp)$, and solve the synthesis problem over the discrete-action IMDP.
\end{remark}
Corollary \ref{cor:LP1} amounts to $|\Qimdp|$ LPs with $n_{\Aimdp}$ decision variables and $\Aimdp$ as a constraint set. On the other hand, the approach of Corollary \ref{cor:LP2} (or Remark \ref{rem:discreteIMDP_linear}) amounts to $|\ver(\Aimdp)|$ LPs with $|\Qimdp|$ decision variables. Note, however, that to solve the LPs of Corollary \ref{cor:LP2}, one can employ the algorithm proposed in \cite{givan2000bounded}, which has been shown to be very efficient (see also Remark \ref{rem:similarity to givan}). Intuitively, when $|\ver(\Aimdp)|>>|\Qimdp|$, then the approach of Corollary \ref{cor:LP1} might be preferred against the one of Corollary \ref{cor:LP2}, and vice-versa. Nevertheless, a thorough experimental study has to be conducted, to decide when to use each.
\begin{remark}
Max-min problems with linear objective functions and constraints, in their general form, are studied in \cite{falk1973linear}. It is therein where it is proven that an optimal solution is attained at the vertices of the constraint set. Moreover, an algorithm is proposed that decomposes the original max-min problem to a number of LPs, which in our case would be $O(|\Qimdp|^2)$ LPs. Here, we are able to derive more computationally efficient results, because we exploit our specific problem's special structure (e.g, the knowledge that $\Plow(q,a,q') \leq \Pup(q,a,q')$ and $\sum_{q' \in \Qimdp} \Plow(q,a,q') \leq 1 \leq \sum_{q' \in \Qimdp} \Pup(q,a,q')$). 
\end{remark}

\textit{\textbf{The concave/convex case.}} When $\Aimdp$ is convex, $\Plow(\cdot,a,\cdot)$ is concave and $\Pup(\cdot, a, \cdot)$ is convex on $a$, then the maximization problems in \eqref{eq:final_LP} are convex programs (CPs):
\begin{corollary}[to Theorem \ref{thm:main_theorem}]\label{cor:CP}
If $\Aimdp$ is convex, $\Plow(\cdot,a,\cdot)$ is concave on $a$ and $\Pup (\cdot,a,\cdot)$ is convex on $a$, then the maximization problems in \eqref{eq:final_LP} are convex programs (CPs). 
\end{corollary}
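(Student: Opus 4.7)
The plan is to show that, under the stated assumptions, the objective of each of the $|\Qimdp|$ subproblems in \eqref{eq:final_LP} is a concave function of $a$, while the constraint set $\Aimdp$ is convex; hence each subproblem falls into the class of convex programs (maximization of a concave objective over a convex feasible set).

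First, I would invoke the sorting assumption imposed by Theorem \ref{thm:main_theorem}, namely $V_{k}(q_i) \geq V_{k}(q_{i+1})$ for all $i$. This yields the sign information $V_{k}(q_i)-V_{k}(q_j) \geq 0$ for $i<j$, and $V_{k}(q_i)-V_{k}(q_j) \leq 0$ for $i>j$. These signs are precisely what is needed to pair with the concavity of $\Plow(q,\cdot,q_i)$ and the convexity of $\Pup(q,\cdot,q_i)$.

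Next, applying standard composition rules: for each $i<j$, the term $(V_{k}(q_i)-V_{k}(q_j))\check{P}(q,a,q_i)$ is a nonnegative scaling of a concave function, hence concave in $a$; for each $i>j$, the term $(V_{k}(q_i)-V_{k}(q_j))\hat{P}(q,a,q_i)$ is a nonpositive scaling of a convex function, hence also concave in $a$. Summing these concave terms and adding the constant $V_{k}(q_j)$ preserves concavity, so the overall objective of the $j$-th subproblem is concave in $a$. Combined with the convexity of $\Aimdp$, each subproblem maximizes a concave function over a convex set, i.e.\ is a CP.

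There is no genuine obstacle here; the argument reduces to a sign-tracking exercise coupled with basic convex-analysis composition rules. The only point worth emphasizing is that the descending sort step built into Algorithm \ref{alg:algorithm} is what makes the concavity argument go through: without ordering $V_{k}$ in decreasing order, the coefficients multiplying $\check{P}$ and $\hat{P}$ would carry mixed signs, and concavity of the objective would generally fail.
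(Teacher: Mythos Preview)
Your argument is correct and is precisely the reasoning the paper relies on: the paper does not spell out a proof for this corollary, treating it as immediate from the form of \eqref{eq:final_LP}, and your sign-tracking via the descending sort together with the standard composition rules is exactly what makes it immediate. There is nothing to add.
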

In such case, standard convex optimization algorithms (see \cite{boyd2004convex}) can be employed to solve \eqref{eq:final_LP} and perform value iteration over the given \caimdp.

\textit{\textbf{The convex/concave case.}} When $\Aimdp$ is a polytope, $\Plow(\cdot,a,\cdot)$ is convex and $\Pup(\cdot, a, \cdot)$ is concave on $a$, we have the following:
\begin{corollary}[to Theorem \ref{thm:main_theorem}]\label{cor:convex_max}
If $\Aimdp$ is a polytope, $\Plow(\cdot,a,\cdot)$ is convex on $a$ and $\Pup (\cdot,a,\cdot)$ is concave on $a$, then the maximization problems in \eqref{eq:final_LP} amount to maximizing a convex function over a polytope. Thus, an optimal solution lies at the constraint set's vertices.
\end{corollary}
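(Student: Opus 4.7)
The plan is to show that each of the $|\Qimdp|$ inner maximization problems in \eqref{eq:final_LP} has a convex objective over the polytope $\Aimdp$, and then invoke the classical extremum principle for convex maximization.

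First, I would exploit the hypothesis that $V_{k}$ has been sorted in descending order, which is precisely the assumption made in Theorem \ref{thm:main_theorem} and enforced by line 5 of Algorithm \ref{alg:algorithm}. This yields $V_{k}(q_i)-V_{k}(q_j)\geq 0$ for every $i<j$ and $V_{k}(q_i)-V_{k}(q_j)\leq 0$ for every $i>j$. Thus, in the inner objective of \eqref{eq:final_LP}, the coefficients multiplying $\check{P}(q,\cdot,q_i)$ are non-negative and those multiplying $\hat{P}(q,\cdot,q_i)$ are non-positive.

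Second, I would use the elementary facts that (i) a non-negative scalar times a convex function is convex, and (ii) a non-positive scalar times a concave function is also convex. By hypothesis, $\check{P}(q,\cdot,q_i)$ is convex in $a$ and $\hat{P}(q,\cdot,q_i)$ is concave in $a$. Combining these observations with the sign analysis above, every term inside the sum in the objective of \eqref{eq:final_LP} is a convex function of $a$. Since a finite sum of convex functions is convex, and adding the constant $V_{k}(q_j)$ preserves convexity, for each fixed $j$ the inner objective is convex in $a$ on $\Aimdp$.

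Finally, I would apply the standard result from convex analysis (see, e.g., \cite{boyd2004convex}) that a convex function attains its maximum over a nonempty compact convex set at an extreme point of that set. Since $\Aimdp$ is a polytope, it is compact and convex, and its set of extreme points coincides with $\ver(\Aimdp)$. Therefore, each of the $|\Qimdp|$ inner maximizations in \eqref{eq:final_LP} is attained at some vertex of $\Aimdp$, and by Theorem \ref{thm:main_theorem} the overall optimizer $a_\star$ of \eqref{original_opti_problem} can be chosen from $\ver(\Aimdp)$. The main subtlety, and the only point requiring care, is the sign bookkeeping in the first step; everything else is routine. As in Remark \ref{rem:discreteIMDP_linear} for the linear case, this implies that it suffices to perform synthesis over a discrete-action IMDP whose action set is $\ver(\Aimdp)$.
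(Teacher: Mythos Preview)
Your proposal is correct and follows the same approach as the paper. The paper's own proof is a one-liner that simply cites \cite[Theorem 32.2]{rockafellar1970convex} for the vertex-attainment result, leaving the convexity of the objective implicit in the corollary statement; you have correctly and carefully spelled out the sign bookkeeping that makes the objective convex, which is exactly the routine verification the paper omits.
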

\begin{proof}[Proof of Corollary \ref{cor:convex_max}]
The fact that maximizing a convex function over a polytope attains a solution at the vertices is well-known; e.g., see \cite[Theorem 32.2]{rockafellar1970convex}. 
\end{proof}
In this case, we can either solve all $|\Qimdp|$  problems in \eqref{eq:final_LP}, by evaluating the objective function on the vertices of $\Aimdp$, or we can solve the $|\ver(\Aimdp)|$ LPs \eqref{eq:vertices}, which, again, is equivalent to building a discrete-action IMDP, as explained in Remark \ref{rem:discreteIMDP_linear}. Either way, when $|\ver(\Aimdp)|$ is small, value iteration can be solved quickly.

\section{Numerical Example}
Here, we demonstrate this work's theoretical results on a numerical example. We consider the concave/convex case and we compare the results obtained by synthesizing over a \caimdp with the ones obtained by discrete-action IMDPs (discretized versions of the \caimdp). We constructed a randomly generated \caimdp $\I = (\Qimdp, \Aimdp, \Plow, \Pup, R)$ with $|\Qimdp| = 25$ states, concave/convex transition bounds and a 3-dimensional action space ($n_{\Aimdp}=3$). The action set is the following cylinder:
\begin{equation*}
   \Aimdp = \{a\in\real^3: (a_1-0.5)^2 + (a_2-0.5)^2 \leq 0.2, \text{ }a_3\in[0,1]\} 
\end{equation*}
The horizon $N=10$ and $\gamma = 1$.

We computed the robust optimal policy and expected cumulative reward $\mathcal{R}_{\star}(q)$ (for all $q$) over the \caimdp, via convex programming, as instructed by Corollary \ref{cor:CP}. Moreover, we obtained policies and the corresponding rewards $\mathcal{R}_s(q)$ (for all $q$) by performing synthesis over discrete-action IMDPs, which are discretized versions of the original \caimdp: a discrete-action IMDP is constructed as $\I^s_{discr}=(\Qimdp, \hat{\Aimdp}_s, \Plow, \Pup, R)$, where $\hat{\Aimdp}_s\subseteq\Aimdp$ is a finite set of $s$ actions randomly sampled from the continuous action set $\Aimdp$; that is, $|\hat{\Aimdp}_s| = s$. Specifically, we computed $\mathcal{R}_s(q)$ for different values of $s$. Note that, since randomness is involved in the experiments\footnote{In the case of value iteration over the \caimdp with convex programming, the initial condition for the optimization is picked randomly, which affects the total computation time. In the case of discrete-action IMDPs, the sampled actions affect both the computation time and, especially, the reward $\mathcal{R}_s(q)$.}, we performed them multiple times\footnote{Value iteration over the \caimdp has been performed 3 times. The experiments on discrete-action IMDPs for $s=1,8,27,64,125$ have been performed $[20, 11, 9, 6, 3]$ times, respectively.}. Moreover, to perform synthesis over the discrete-action IMDPs (i.e. to solve the $s$ associated LPs), we used the algorithm of \cite{givan2000bounded}. Finally, optimality tolerance for convex programming was set to $10^{-4}$.

\begin{figure}[h!]
    \centering
    \includegraphics[width = 0.7\textwidth]{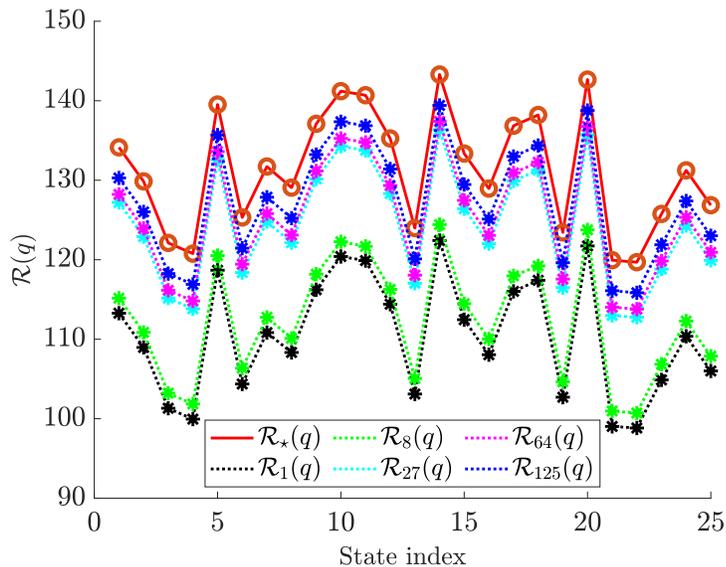}
    \caption{The optimal reward $\mathcal{R}_{\star}(q)$ (solid line) obtained from the \caimdp, and rewards $\mathcal{R}_s(q)$ (dashed lines) obtained from discrete-action IMDPs.}
    \label{fig:rewards}
\end{figure}
Figure \ref{fig:rewards} depicts the optimal reward $\mathcal{R}_{\star}(q)$ obtained from the \caimdp and rewards $\mathcal{R}_s(q)$ obtained from some of the discrete-action IMDPs, for different values of $s$. As expected, in all cases $\mathcal{R}_{\star}(q)\geq\mathcal{R}_s(q)$, since $\mathcal{R}_{\star}(q)$ is the optimal reward, whereas $\mathcal{R}_s(q)$ is suboptimal. Moreover, Table 1 shows the average percentage of suboptimality of $\mathcal{R}_s(q)$, i.e. the quantity: $$100\cdot\max_q\tfrac{\mathcal{R}_{\star}(q) - \mathcal{R}_s(q)}{\mathcal{R}_{\star}(q)}\%,$$ for all different values of $s$. Furthermore, it shows the average computation time for each experiment. We observe that performing synthesis on the \caimdp is approximately as expensive as synthesizing over a discrete-action IMDP with only 27 actions (only 3 points per dimension of the action space). That is while $\mathcal{R}_{27}(q)$ is significantly suboptimal with an average suboptimality of 5.17$\%$. In fact, even with 125 sampled actions (5 points per dimension), the suboptimality percentage is considerable. From this experiment, we deduce that synthesizing over a \caimdp is not only the way to optimality, but also it is computationally efficient even compared to discrete-action IMDPs. In fact, for even higher-dimensional action spaces, it is expected that it is the only tractable choice.
\begin{table}
\label{table:cpvsdiscrete}
\caption{The first five lines report average cpu times and average suboptimality percentages for value iteration over discrete-action IMDPs obtained by randomly sampling $s$ actions from the continuous action-space. The last line reports the corresponding results for value iteration over the \caimdp.}
\begin{tabular}{c | r | c}
\toprule
 \textbf{$\#$samples $s$} & \textbf{CPU Time (s)} &  $\boldsymbol{100\cdot\max_q\tfrac{\mathcal{R}_{\star}(q) - \mathcal{R}_s(q)}{\mathcal{R}_{\star}(q)}\%}$ \\ [0.5ex] 
 \hline
 1 &88  & 16.51$\%\;$  \\ 
 8 &668  & 7.86$\%$  \\
 27 &2170  & 5.17$\%$  \\
 64 &6225  & 4.59$\%$  \\
 125 &9736  & 4.32$\%$  \\ 
 \hline
 \textbf{\caimdp} &\textbf{2212} &\textbf{0$\boldsymbol{\%}$} \\
\bottomrule
\end{tabular}
\end{table}

\section{Discussion: \NoCaseChange{ca}IMDP\NoCaseChange{s} for control synthesis for stochastic systems}
We now, briefly, describe how this paper's results can be employed for control synthesis for stochastic systems. 

\subsection{Abstracting stochastic systems via {\caimdp}s}
Consider the controlled stochastic system:
\begin{equation}\label{eq:control_sys}
    x_{k+1} = f(x_k,a_k,w_k)
\end{equation}
where $x_k\in X$ is the state of the system at time $k$, $a_k\in\Aimdp$ is the action at time $k$ and $w_k \in \real^{n_w}$ are i.i.d random variables with a known probability distribution $p_w$. The paths of the system are again sequences $\omega = (x_1,a_1),(x_2,a_2),\dots$, and we denote the system's set of finite paths by $\Pathimdpfin_{\mathrm{sys}}$. Assume a state-dependent reward $R_{sys}:X\to \real_{\geq 0}$. Given a specific control policy $\chi:\Pathimdpfin_{\mathrm{sys}} \times X\to\Aimdp$, a horizon $N$, an initial condition $x_0\in\real^{n_x}$ and a factor $\gamma\geq 0$, the expected cumulative reward over the system's trajectories is defined as:
\begin{equation*}
    \expec^{x_0}_{\pr_w} \left[ \sum_{i=0}^N \gamma^i R_{sys}(\omega^q(i)) \mid \chi \right]
\end{equation*}
where $\pr_w$ is a probability measure over the system's paths induced by $p_w$ and the initial condition $x_0$. This expected reward constitutes a metric describing the system's behaviors under policy $\chi$ (e.g., a reachability or safety property, a quantitative measure such as convergence speed, etc.). The objective is to synthesize a policy $\chi_\star$ that maximizes the expected reward.

In the literature of IMDP-abstractions for control synthesis for stochastic systems \citep{lahijanian2015formal,laurenti2020formal,coogan2022abstraction}, the above problem is solved by abstracting the system via an appropriate IMDP, finding an optimal policy over the constructed IMDP and mapping this policy back to the original system. Nevertheless, as already mentioned in the introduction section, the action set $\Aimdp$ is, generally, assumed discrete a-priori. To the best of our knowledge, the only exception is \cite{coogan2022abstraction}, which keeps the action-space continuous, but the max-min problem of value iteration is solved using heuristics.

With the results derived in this work, we can avoid unnecessary discretizations of the action set or heuristics with no formal guarantees. To abstract a system \eqref{eq:control_sys} by a \caimdp, the state space $X$ is partitioned into $|\Qimdp|$ sets $q_i \subseteq X$, which represent the {\caimdp}'s states, and the transition bounds are defined as follows for all $q_i,q_j \in \Qimdp$:
\begin{align}\label{eq:transition_bounds}
    &\Plow(q_i,a,q_j) \leq \inf\limits_{x\in q_i}\pr_w(f(x,a,w) \in q_j), \\&\Pup(q_i,a,q_j) \geq \sup\limits_{x\in q_i}\pr_w(f(x,a,w) \in q_j)
\end{align}
The \caimdp-reward is defined by $R(q) = \inf_{x\in q}R_{sys}(q)$. By extending \cite[Theorem 4.1]{delimpaltadakis2022formal} to {\caimdp}s, we can show that the optimal expected cumulative {\caimdp}-reward lower-bounds the optimal expected cumulative reward over the system's trajectories:
\begin{equation}
    \max_{\str}\min_{\adv}\mathcal{R}^{N}_{\str,\adv}(q_0)\leq \max_{\chi}\expec^{x_0}_{\pr_w}\left[\sum_{i=0}^N \gamma^i R_{sys}(\omega^q(i)) \mid \chi \right]
\end{equation}
where $x_0\in q_0$. Furthermore, the optimal reward obtained via the \caimdp is larger than or equal to an optimal reward obtained via a discrete-action IMDP, as its action set is a subset of the {\caimdp}'s action set.

In order to enable the results presented in Section \ref{sec:cases} and render value iteration easy-to-solve, the bounds $\Plow(\cdot,a,\cdot)$ and $\Pup(\cdot,a,\cdot)$ have to be (piecewise) linear or concave/convex or convex/concave on the action variables. For instance, this is the case for a large class of switching diffusion models where the continuous action controls the probability of switching between different modes of the system \citep{yin2009hybrid}. In more general cases, the functions $\inf_{x\in q_i}$ and $\sup_{x\in q_i}\pr_w(f(x,a,w) \in q_j)$ may need to be lower-/upper-bounded via (piecewise) linear, convex or concave functions.

\subsection{Suboptimality bounds}
A policy derived through a {\caimdp}-abstraction is generally suboptimal for system \eqref{eq:control_sys}, even though the algorithms presented here do compute a policy that is optimal w.r.t. the given {\caimdp}. That is because the \caimdp is an abstraction of the original stochastic system in \eqref{eq:control_sys}, and not an equivalent representation. Nevertheless, {\caimdp}s can be used to obtain so-called \emph{suboptimality bounds}: bounds on the difference between the optimal reward computed via the \caimdp and the true optimal reward:
\begin{equation*}
     \max_{\chi}\expec^{x_0}_{\pr_w}\left[\sum_{i=0}^N \gamma^i R_{sys}(\omega^q(i)) \mid \chi \right] - \max_{\str}\min_{\adv}\mathcal{R}^{N}_{\str,\adv}(q_0)
\end{equation*}
This quantifies how suboptimal a policy obtained via a \caimdp is; if the obtained result is not satisfactory, then the {\caimdp}-abstraction has to be refined with a finer partition of the state-space $X$ or with tighter transition bounds.

Again, by extending \cite[Theorem 4.1]{delimpaltadakis2022formal} to {\caimdp}s, it can be shown that, if we define the reward function of the \caimdp as $R'(q) = \sup_{x\in q}R_{sys}(q)$ (previously, we used $\inf$), the \emph{optimistically optimal} \caimdp reward upper bounds the true optimal one:
\begin{equation*}
    \max_{\chi}\expec^{x_0}_{\pr_w} \left[ \sum_{i=0}^N \gamma^i R_{sys}(\omega^q(i)) \mid \chi \right] \leq \max_{\str,\adv}\mathcal{R'}^{N}_{\str,\adv}(q_0)
\end{equation*}
where $\mathcal{R'}^{N}_{\str,\adv}(q_0)=\expec^{q_0}\left[\sum_{i=0}^N \gamma^i R'(\omega^q(i)) \mid \str,\adv \right]$, and the expectation is w.r.t. the ptobability measure over the paths of the \caimdp. Thus, $\max_{\str,\adv}\mathcal{R'}^{N}_{\str,\adv}(q_0) - \max_{\str}\min_{\adv}\mathcal{R}^{N}_{\str,\adv}(q_0)$ provides a suboptimality bound. 

The above requires solving $\max_{\str,\adv}\mathcal{R}^{N}_{\str,\adv}(q_0)$, which amounts to a value iteration, in which, instead of a max-min problem, the following max problem needs to be solved:
\begin{equation}\label{eq:max-max}
\begin{aligned}
	\max_{a \in \Aimdp, p \in \real^{|\Qimdp|}} \quad & p^\top V_{k}\\
	\mathrm{s.t.:} \quad & \check{P}(q,a,q_i)-p_i \leq 0, \text{ }i=1,2,\dots,|\Qimdp|,\\
				  & -\hat{P}(q,a,q_i)+p_i \leq 0, \text{ }i=1,2,\dots,|\Qimdp|,\\
				  & \sum_{i=1}^{|\Qimdp|}p_i -1 = 0\\
\end{aligned}
\end{equation}
Referring to the three cases distinguished in Section \ref{sec:cases}, problem \eqref{eq:max-max}: a) in the linear case, is also an LP, b) in the convex/concave case, is a CP, but c) in the concave/convex case, its constraint set is nonconvex: the intersection of sub-level sets of convex and concave functions. Hence, in the linear and the convex/concave case, problem \eqref{eq:max-max} can be solved efficiently, to generate a tight suboptimality bound. In the concave/convex case, we could derive an upper bound on the optimal value of problem \eqref{eq:max-max}, e.g., by using an SMT solver \cite{dreal}.


\section{Conlusion}
We have introduced continuous-action interval Markov decision processes ({\caimdp}s), and studied value iteration for optimizing pessimistic expected cumulative rewards. Specifically, we have shown that the max-min problem associated to value iteration can be decomposed to $|\Qimdp|$ maximization problems. The simple form of these max problems allowed us to distinguish cases where they, and thus value iteration, can be solved efficiently. It also provided us with further interesting insights, such as cases where synthesis over discrete-action and continuous-action IMDPs is equivalent. These results have been demonstrated on a numerical example. 

Furthermore, we have, briefly, discussed how {\caimdp}s can be employed for control synthesis for stochastic systems. Specifically, we have given guidelines on how to construct the {\caimdp}-abstraction such that it falls in one of the aforementioned easy-to-solve cases. We have, also, shown how to obtain suboptimality bounds on the policy generated by the {\caimdp}-abstraction.

Future work includes: i) studying convergence of the value iteration for infinite horizons, ii) addressing rewards that depend both on the state and the action, iii) investigating if the $|\Qimdp|$ maximization problems can be simplified even further, and iv) a thorough study on {\caimdp}-abstractions and on obtaining suboptimality bounds.

\section{Technical proofs}\label{sec:proofs}
\subsection{Proof of Proposition \ref{prop:caimdps_value function}}\label{sec:appendix_proof_value_iteration}
\begin{proof}[Sketch of Proof of Proposition \ref{prop:caimdps_value function}]
By the transformation presented in \citep[Appendix A]{nilim2005robust} for finite-action IMDPs, the robust control problem can be equivalently transformed into a turn-based zero-sum game. Consequently, similarly to \cite{nilim2005robust}, we can employ the same reasoning as in the proof of \cite[Theorem 1]{nowak1984zero}, to show  that:
$$\mathcal{R}^{j}_\star(q_0) = V_{N-j}(q_0)$$
where $V_{N-j}$ comes from the following Bellman recursion:
\begin{align*}
    &V_{N}(q) = R(q)\label{eq:original_vi1}\\
    &V_{k-1}(q) = R(q) + \gamma\max_{\mu\in M}\int_\Aimdp \min_{p\in\Gamma_{q,a}}\sum_{q'\in\Qimdp}p(q')V_{k}(q') \mu(a)da(q')
\end{align*}
where $M$ is the set of probability distributions\footnote{Maximization takes place over the set $M$ of probability distributions on $\Aimdp$, because, as mentioned in Remark \ref{rem:deterministic policies}, policies can be probabilistic. Nonetheless, it is shown that optimal policies are deterministic (or, in other words, Dirac distributions).} over $\Aimdp$.  We now notice that for any $\mu\in M$:
$$ \int_\Aimdp \min_{p\in\Gamma_{q,a}}\sum_{q'\in\Qimdp}p(q')V_{k}(q') \mu(a)da \leq \max_{a\in \Aimdp}\min_{p\in\Gamma_{q,a}}\sum_{q'\in\Qimdp}p(q')V_{k}(q')$$
This implies that the optimal choice is a Dirac distribution centered in any $a\in \argmax_{a\in\Aimdp}\min_{p\in\Gamma_{q,a}}\sum_{q'\in\Qimdp}p(q')V_{k}(q')$. Consequently, we can replace the maximization over $M$ with a maximization over $\Aimdp,$ i.e.:
\begin{align*}
    \max_{\mu\in M}\int_\Aimdp \min_{p\in\Gamma_{q,a}}\sum_{q'\in\Qimdp}p(q')V_{k}(q') \mu(a)da = \max_{a\in \Aimdp}\min_{p\in\Gamma_{q,a}}\sum_{q'\in\Qimdp}p(q')V_{k}(q')
\end{align*}
\end{proof}

\subsection{Proof of Proposition \ref{prop:dual_problem}}\label{sec:appendix_proof_duality_prop}
First, let us prove the following lemma:
\begin{lemma}\label{duality_lemma}Consider the following max-min problem:
\begin{equation*}
	\begin{aligned}
		\max_{a \in \Aimdp} \min_{p \in P, f(a,p) \leq 0}  \quad &c^\top p
	\end{aligned}
\end{equation*}
Assume that it is feasible and bounded, and that strong duality holds for the inner minimization problem and its Lagrangian dual (see, e.g., \cite{boyd2004convex}):
\begin{equation}\label{lemma_duality_eq}
	\forall a\in\Aimdp: \quad \min_{p \in P, f(a,p) \leq 0}  c^\top p = \max_{\lambda\succeq0, \nu}g(a,\lambda,\nu)
\end{equation}
Then $\max\limits_{a \in \Aimdp} \min\limits_{p \in P, f(a,p) \leq 0}  c^\top p = \max\limits_{a\in\Aimdp,\lambda\succeq0, \nu}g(a,\lambda,\nu)$. Moreover:
\begin{equation*}
    (a_\star, \lambda_\star,\nu_\star)\in\argmax\limits_{a\in\Aimdp,\lambda\succeq 0, \nu}g(a,\lambda,\nu)\implies a_\star\hspace{-1mm}\in\argmax\limits_{a\in\Aimdp}\min\limits_{p \in P, f(a,p)\leq 0}\hspace{-4mm}c^\top p
\end{equation*}
\end{lemma}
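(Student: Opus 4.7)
The plan is to lift the pointwise strong-duality identity to an equality between the two joint optimization problems, by interchanging the outer maximizations. The observation driving everything is that, after strong duality is invoked inside, both problems become pure \emph{maximizations}, so they can be merged by the elementary identity $\max_{x}\max_{y} h(x,y) = \max_{(x,y)} h(x,y)$. This is what makes the argument much simpler than trying to swap a $\max$ with a $\min$ directly.

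First I would fix an arbitrary $a \in \Aimdp$ and apply hypothesis \eqref{lemma_duality_eq} to rewrite $\min_{p\in P,\, f(a,p)\leq 0} c^\top p$ as $\max_{\lambda\succeq 0,\nu} g(a,\lambda,\nu)$. Taking $\max_{a \in \Aimdp}$ of both sides and using the joint-maximization identity above yields the first claim,
\begin{equation*}
    \max_{a\in \Aimdp}\min_{p\in P,\, f(a,p)\leq 0} c^\top p \;=\; \max_{a\in \Aimdp,\,\lambda\succeq 0,\,\nu} g(a,\lambda,\nu).
\end{equation*}

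For the second claim, take any $(a_\star,\lambda_\star,\nu_\star)$ attaining the joint maximum on the right. By the definition of pointwise maximum, $g(a_\star,\lambda_\star,\nu_\star) \leq \max_{\lambda\succeq 0,\nu} g(a_\star,\lambda,\nu)$, and by strong duality at $a_\star$ this latter quantity equals $\min_{p\in P,\, f(a_\star,p)\leq 0} c^\top p$. Combining with the first claim gives
\begin{equation*}
    \min_{p\in P,\, f(a_\star,p)\leq 0} c^\top p \;\geq\; g(a_\star,\lambda_\star,\nu_\star) \;=\; \max_{a\in \Aimdp}\min_{p\in P,\, f(a,p)\leq 0} c^\top p.
\end{equation*}
Since the reverse inequality is automatic (evaluating the outer objective at $a=a_\star$ can only be less than or equal to its maximum over $\Aimdp$), equality holds, meaning $a_\star$ is itself a maximizer of $a \mapsto \min_{p\in P,\, f(a,p)\leq 0} c^\top p$.

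I expect no serious obstacle: the heavy lifting is done by the strong-duality hypothesis, which is exactly what converts the inner $\min$ into a $\max$ and permits merging the two maximizations. The only points requiring some care are that the maxima appearing in the argument are actually attained (which is immediate, since the hypothesis presumes the existence of a joint maximizer $(a_\star,\lambda_\star,\nu_\star)$) and that the feasibility/boundedness assumption rules out $\pm\infty$ values that would otherwise complicate the chain of (in)equalities. With Lemma \ref{duality_lemma} in hand, Proposition \ref{prop:dual_problem} then follows by instantiating $c = V_k$, $P = \real^{|\Qimdp|}$, and writing out the Lagrangian dual of the inner linear program explicitly; strong duality for the inner problem is guaranteed by standard LP duality, since the inner problem is feasible (any feasible transition distribution certifies this) and bounded.
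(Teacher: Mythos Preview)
Your proof is correct and essentially the same as the paper's: both arguments apply the pointwise strong-duality hypothesis and then use the elementary fact that iterated maximizations can be merged, together with the definition of a maximizer, to transfer optimality of $(a_\star,\lambda_\star,\nu_\star)$ back to $a_\star$. Your organization---first establishing the equality of optimal values via $\max_a\max_{\lambda,\nu}=\max_{a,\lambda,\nu}$, then separately arguing that the $a$-component of a joint maximizer solves the max--min problem---is a slightly cleaner packaging of the same chain of (in)equalities the paper uses.
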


\begin{proof}[Proof of Lemma \ref{duality_lemma}]
We have that:
\begin{equation}\label{lemma_proof_eq1}
	\forall a\in\Aimdp: \quad  g(a_\star,\lambda_\star,\nu_\star) \geq \max_{\lambda\succeq0, \nu}g(a,\lambda,\nu) = \min_{p \in P, f(a,p) \leq 0}  c^\top p
\end{equation}
where, for the inequality we used the definition of $(a_\star, \lambda_\star,\nu_\star)$ and for the equality we used \eqref{lemma_duality_eq}. Moreover, we have:
\begin{equation}\label{lemma_proof_eq2}
	g(a_\star,\lambda_\star,\nu_\star) = \max_{\lambda\succeq0, \nu}g(a_\star,\lambda,\nu) = \min_{p \in P, f(a_\star,p) \leq 0}  c^\top p
\end{equation}
where for the first equality we used the definition of $(a_\star, \lambda_\star,\nu_\star)$ and for the second one we used \eqref{lemma_duality_eq}. Replacing \eqref{lemma_proof_eq2} into \eqref{lemma_proof_eq1} we get:
\begin{equation*}
	\forall a\in\Aimdp: \quad \min_{p \in P, f(a_\star,p) \leq 0}  c^\top p \geq \min_{p \in P, f(a,p) \leq 0}  c^\top p
\end{equation*}
Since the above holds for all $a\in\Aimdp$ and since $a_{\star}\in \Aimdp$, then:
\begin{equation*}
	\min_{p \in P, f(a_\star,p) \leq 0}  c^\top p = \max_{a\in\Aimdp}\min_{p \in P, f(a,p) \leq 0}  c^\top p
\end{equation*}
which, since the left-hand side is equal to $g(a_{\star},\lambda_{\star},\nu_{\star})$, proves that $\max\limits_{a \in A} \min\limits_{p \in P, f(a,p) \leq 0}  c^\top p = \max\limits_{a\in\Aimdp,\lambda\succeq0, \nu}g(a,\lambda,\nu)$. Moreover, it implies that $a_\star\in\argmax\limits_{a\in\Aimdp}\min\limits_{p \in P, f(a,p)\leq 0}c^\top p$. The proof is complete.
\end{proof}
Now, we are ready to prove Proposition \ref{prop:dual_problem}.
\begin{proof}[Proof of Proposition \ref{prop:dual_problem}]
The Lagrange dual function that corresponds to the inner minimization problem of \eqref{original_opti_problem} is:
\begin{align*}
	g(a,\lambda_L,\lambda_U, \nu) &= \inf_{p \in \real^{|\Qimdp|}}\Big[p^\top V_{k} + \sum_{i=1}^{|\Qimdp|}\lambda_{L_i}\Big(\check{P}(q,a,q_i)-p_i\Big)++\sum_{i=1}^{|\Qimdp|}\lambda_{U_i}\Big(-\hat{P}(q,a,q_i)+p_i\Big)+\nu\sum_{i=1}^{|\Qimdp|}p_i - \nu\Big]\\
	&=\sum_{i=1}^{|\Qimdp|}\lambda_{L_i}\check{P}(q,a,q_i) - \sum_{i=1}^{|\Qimdp|}\lambda_{U_i}\hat{P}(q,a,q_i) -\nu + \inf_{p \in \real^{|\Qimdp|}}\Big[p^T(V_{k}-\lambda_L + \lambda_U + \nu\ind)\Big]
\end{align*}
For any $a\in\Aimdp$, the Lagrangian dual problem of the minimization problem in \eqref{original_opti_problem} is (see \cite{boyd2004convex}):
\begin{equation}\label{eq:duality_prop_proof_1}
	\begin{aligned}
		\max_{\lambda_L, \lambda_U, \nu} \quad &g(a,\lambda_L,\lambda_U, \nu)\\
		\mathrm{s.t.:} \quad &\lambda_L \succeq 0, \text{ }\lambda_U \succeq 0, \text{ }\nu \in\real
	\end{aligned}
\end{equation}
As done in \cite{boyd2004convex}, to maximize $g(a,\lambda_L,\lambda_U, \nu)$, we have to make $V_{k}-\lambda_L + \lambda_U + \nu\ind=0$, or otherwise the term $\inf_{p \in \real^{|\Qimdp|}}\Big[p^T(V_{k}-\lambda_L + \lambda_U + \nu\ind)\Big]$ will always be $-\infty$. Thus, \eqref{eq:duality_prop_proof_1} becomes:
\begin{equation}\label{eq:duality_prop_proof_2}
	\begin{aligned}
		\max_{\lambda_L, \lambda_U, \nu} \quad &\sum_{i=1}^{|\Qimdp|}\lambda_{L_i}\check{P}(q,a,q_i) - \sum_{i=1}^{|\Qimdp|}\lambda_{U_i}\hat{P}(q,a,q_i) -\nu\\
		\mathrm{s.t.:} \quad &\text{ }\lambda_L \succeq 0, \text{ }\lambda_U \succeq 0, \text{ }\nu \in\real,\text{ }V_{k}-\lambda_L + \lambda_U + \nu\ind=0
	\end{aligned}
\end{equation}
As the inner minimization problem of \eqref{original_opti_problem} is an LP, strong duality holds between the minimization problem of \eqref{original_opti_problem} and \eqref{eq:duality_prop_proof_2}. Thus, we can apply Lemma \eqref{duality_lemma}, and the proof is completed.
\end{proof}

\subsection{Proof of Theorem \ref{thm:main_theorem}}\label{sec:appendix_proof_main}
\begin{proof}[Proof of Theorem \ref{thm:main_theorem}]
From Proposition \ref{prop:dual_problem}, we only need to solve \eqref{eq:dual_opti_problem}, in order to calculate the optimal value of \eqref{original_opti_problem} and to find an optimal action $a_{\star}$ that solves \eqref{original_opti_problem}. Thus, for the following, we focus on \eqref{eq:dual_opti_problem}.

\textit{Eliminating the $\lambda$-variables.} First, we eliminate the equality constraint, and thus the variable $\lambda_L$, by replacing $\lambda_L = V_{k} + \lambda_U + \nu\ind$ in the objective function and all other constraints:
\begin{equation}\label{eq:dual_eliminated_equality}
	\begin{aligned}
		\max_{a, \lambda_U, \nu} \quad &\sum_{i=1}^{|\Qimdp|}(V_{k}(q_i) + \lambda_{U_i} + \nu)\check{P}(q,a,q_i) - \sum_{i=1}^{|\Qimdp|}\lambda_{U_i}\hat{P}(q,a,q_i) -\nu=\\
		\max_{a, \lambda_U, \nu} \quad &\sum_{i=1}^{|\Qimdp|}V_{k}(q_i)\check{P}(q,a,q_i)  + \sum_{i=1}^{|\Qimdp|}\lambda_{U_i}\underbrace{(\check{P}(q,a,q_i)-\hat{P}(q,a,q_i))}_{\leq 0}+\nu\underbrace{(\sum_{i=1}^{|\Qimdp|}\check{P}(q,a,q_i)-1)}_{\leq 0}\\
		\mathrm{s.t.:} \quad & a\in\Aimdp, \text{ }\lambda_U \succeq 0, \text{ }\nu \in\real,\text{ }V_{k} + \lambda_U + \nu\ind \succeq 0
	\end{aligned}
\end{equation}
For any given $a$ and $\nu$, since $(\check{P}(q,a,q_i)-\hat{P}(q,a,q_i))\leq 0$ and $\lambda_U\succeq 0$, the variables $\lambda_{U_i}$ have to be made as small as possible, in order to maximize the objective function. Due to the two constraints $\lambda_U\succeq 0$ and $V_{k} + \lambda_U + \nu\ind \succeq 0$, this is encoded as	$\lambda_{U_i} = \max\Big(0, - V_{k}(q_i)-\nu\Big)$. Thus, optimization problem \eqref{eq:dual_eliminated_equality} becomes:
\begin{equation}\label{eq:dual_eliminated_lambdas}
	\begin{aligned}
		\max_{a, \nu} \quad &\sum_{i=1}^{|\Qimdp|}V_{k}(q_i)\check{P}(q,a,q_i)  + \sum_{i=1}^{|\Qimdp|}\Big[\max\Big(0, -V_{k}(q_i)-\nu\Big)\underbrace{(\check{P}(q,a,q_i)-\hat{P}(q,a,q_i))}_{\leq 0}\Big] +\nu\underbrace{(\sum_{i=1}^{|\Qimdp|}\check{P}(q,a,q_i)-1)}_{\leq 0}\\
		\mathrm{s.t.:} \quad &a\in\Aimdp, \text{ }\nu \in\real
	\end{aligned}
\end{equation}
where both constraints $\lambda_U\succeq 0$ and $V_{k} + \lambda_U + \nu\ind \succeq 0$ have been by-construction eliminated by fixing $\lambda_{U_i} = \max\Big(0, -V_{k}(q_i)-\nu\Big)$. We are now left only with the optimization variables $a$ and $\nu$.

\textit{Deriving $|\Qimdp|$ simpler max problems from \eqref{eq:dual_eliminated_lambdas}.} Problem \eqref{eq:dual_eliminated_lambdas} contains the product $\max\Big(0, -V_{k}(q_i)-\nu\Big)(\check{P}(q,a,q_i)-\hat{P}(q,a,q_i))$, which is hard to analyze directly. Let us further simplify it. Since $V_{k}$ is in descending order, optimization problem \eqref{eq:dual_eliminated_lambdas} can be broken down to $|\Qimdp|+1$ simpler problems, as follows: 
\begin{enumerate}
	\item The 1st optimization problem is the same with \eqref{eq:dual_eliminated_lambdas}, but $\nu$ is restricted by the inequality constraint $\nu \leq -V_{k}(q_{1})$.
	\item The $(|\Qimdp|+1)$-th optimization problem is the same with \eqref{eq:dual_eliminated_lambdas}, but $\nu$ is restricted by the inequality constraint $\nu \geq -V_{k}(q_{|\Qimdp|})$.
	\item For $j\in \{2,\dots, |\Qimdp|\}$, the $j$-th optimization problem is the same with \eqref{eq:dual_eliminated_lambdas}, but $\nu$ is restricted by the inequality constraints $-V_{k}(q_{j-1}) \leq \nu \leq - V_{k}(q_{j})$.
\end{enumerate}
Let us examine each case separately.

\textbf{Case 1:} Since $\nu \leq -V_{k}(q_{1})$, then $\nu \leq -V_{k}(q_{i})$ for all $i=1,2,\dots,|\Qimdp|$. Thus, $\max\Big(0, -V_{k}(q_i)-\nu\Big)=-V_{k}(q_i)-\nu$ for all $i$. Hence, optimization problem \eqref{eq:dual_eliminated_lambdas} becomes:
\begin{equation*}
	\begin{aligned}
		\max_{a, \nu} \quad &\cancel{\sum_{i=1}^{|\Qimdp|}V_{k}(q_i)\check{P}(q,a,q_i)}  + \sum_{i=1}^{|\Qimdp|}\Big[\Big( -V_{k}(q_i)-\nu\Big)(\cancel{\check{P}(q,a,q_i)}-\hat{P}(q,a,q_i))\Big] +\nu(\cancel{\sum_{i=1}^{|\Qimdp|}\check{P}(q,a,q_i)}-1)=\\
		\max_{a, \nu} \quad &\sum_{i=1}^{|\Qimdp|}V_{k}(q_i)\hat{P}(q,a,q_i) +\nu\underbrace{(\sum_{i=1}^{|\Qimdp|}\hat{P}(q,a,q_i)-1)}_{\geq 0}\\
		\mathrm{s.t.:} \quad & a\in\Aimdp, \text{ }\nu \leq -V_{k}(q_{1})
	\end{aligned}
\end{equation*}
Now, to maximize the objective function, given any $a$, $\nu$ has to be made as big as possible, i.e. $\nu = -V_{k}(q_{1})$. Hence, the optimization problem transforms to the following:
\begin{equation}\label{LP1} 
	\begin{aligned}
		\max_{a} \quad &\sum_{i=1}^{|\Qimdp|}\Big(V_{k}(q_i)-V_{k}(q_{1})\Big)\hat{P}(q,a,q_i)+V_{k}(q_{1})=\\
		\max_{a} \quad &\sum_{i=2}^{|\Qimdp|}\Big(V_{k}(q_i)-V_{k}(q_{1})\Big)\hat{P}(q,a,q_i)+V_{k}(q_{1})\\
		\mathrm{s.t.:} \quad & a\in\Aimdp
	\end{aligned}
\end{equation}

\textbf{Case 2:} Since $\nu \geq -V_{k}(q_{|\Qimdp|})$, then $\nu \geq -V_{k}(q_{i})$ for all $i=1,2,\dots,|\Qimdp|$. Thus, $\max\Big(0, -V_{k}(q_i)-\nu\Big)=0$ for all $i$. Hence, optimization problem \eqref{eq:dual_eliminated_lambdas} becomes:
\begin{equation*}
	\begin{aligned}
		\max_{a, \nu} \quad &\sum_{i=1}^{|\Qimdp|}V_{k}(q_i)\check{P}(q,a,q_i)  +\nu\underbrace{(\sum_{i=1}^{|\Qimdp|}\check{P}(q,a,q_i)-1)}_{\leq 0}\\
		\mathrm{s.t.:} \quad & a\in\Aimdp, \text{ }\nu \geq -V_{k}(q_{|\Qimdp|})
	\end{aligned}
\end{equation*}
To maximize the objective function, given any $a$, $\nu$ has to be made as small as possible, i.e. $\nu = -V_{k}(q_{|\Qimdp|})$. Hence, the optimization problem transforms to the following:
\begin{equation}\label{LP2}
	\begin{aligned}
		\max_{a} \quad &\sum_{i=1}^{|\Qimdp|}\Big(V_{k}(q_i) - V_{k}(q_{|\Qimdp|})\Big)\check{P}(q,a,q_i)  +V_{k}(q_{|\Qimdp|})=\\
		\max_{a} \quad &\sum_{i=1}^{|\Qimdp|-1}\Big(V_{k}(q_i) - V_{k}(q_{|\Qimdp|})\Big)\check{P}(q,a,q_i)  +V_{k}(q_{|\Qimdp|})=\\
		\mathrm{s.t.:} \quad & a\in\Aimdp
	\end{aligned}
\end{equation}

\textbf{Case 3:} 
In this case, we have that:
\begin{equation*}
	\max\Big(0, -V_{k}(q_i)-\nu\Big)=\left\{\begin{aligned}
		0, \quad &i=1,2,\dots,j-1\\
		-V_{k}(q_i)-\nu, \quad &i=j,\dots,|\Qimdp|
	\end{aligned}\right.
\end{equation*}
Then, optimization problem \eqref{eq:dual_eliminated_lambdas} becomes:
\begin{equation*}
	\begin{aligned}
		\max_{a, \nu} \quad &\sum_{i=1}^{|\Qimdp|}V_{k}(q_i)\check{P}(q,a,q_i)  + \sum_{i=j}^{|\Qimdp|}\Big[\Big( -V_{k}(q_i)-\nu\Big)(\check{P}(q,a,q_i)-\hat{P}(q,a,q_i))\Big] +\nu(\sum_{i=1}^{|\Qimdp|}\check{P}(q,a,q_i)-1)=\\
		\max_{a, \nu} \quad &\sum_{i=1}^{j-1}V_{k}(q_i)\check{P}(q,a,q_i)  + \sum_{i=j}^{|\Qimdp|}V_{k}(q_i)\hat{P}(q,a,q_i) + \nu\Big(\sum_{i=1}^{j-1}\check{P}(q,a,q_i) + \sum_{i=j}^{|\Qimdp|}\hat{P}(q,a,q_i) - 1\Big)\\
		\mathrm{s.t.:} \quad & a\in\Aimdp, \text{ }-V_{k}(q_{j-1}) \leq \nu \leq -V_{k}(q_{j})
	\end{aligned}
\end{equation*}
Now, for any given $a$, the term $\sum_{i=1}^{j-1}\check{P}(q,a,q_i) \allowdisplaybreaks+ \sum_{i=j}^{|\Qimdp|}\hat{P}(q,a,q_i) - 1$ is either negative or positive. In the first case, we would have to make $\nu$ as small as possible, i.e. $\nu = -V_{k}(q_{j-1})$. In the second case, we would have to make $\nu$ as big as possible, i.e. $\nu = -V_{k}(q_{j})$. Thus, we can break down the above optimization problem to the two following problems \eqref{LP3} and \eqref{LP4}:
\begin{equation}\label{LP3} 
	\begin{aligned}
		\max_{a} \quad &\sum_{i=1}^{j-1}\Big(V_{k}(q_i)-V_{k}(q_{j-1})\Big)\check{P}(q,a,q_i)  + \sum_{i=j}^{|\Qimdp|}\Big(V_{k}(q_i)-V_{k}(q_{j-1})\Big)\hat{P}(q,a,q_i) +V_{k}(q_{j-1})=\\
		\max_{a} \quad&\sum_{i=1}^{j-2}\Big(V_{k}(q_i)-V_{k}(q_{j-1})\Big)\check{P}(q,a,q_i)  + \sum_{i=j}^{|\Qimdp|}\Big(V_{k}(q_i)-V_{k}(q_{j-1})\Big)\hat{P}(q,a,q_i) +V_{k}(q_{j-1})\\
		\mathrm{s.t.:} \quad & a\in\Aimdp
	\end{aligned}
\end{equation}
\begin{equation}\label{LP4}
	\begin{aligned}
		\max_{a} \quad &\sum_{i=1}^{j-1}\Big(V_{k}(q_i)-V_{k}(q_{j})\Big)\check{P}(q,a,q_i)  + \sum_{i=j}^{|\Qimdp|}\Big(V_{k}(q_i)-V_{k}(q_{j})\Big)\hat{P}(q,a,q_i) +V_{k}(q_{j})=\\
		\max_{a} \quad&\sum_{i=1}^{j-1}\Big(V_{k}(q_i)-V_{k}(q_{j})\Big)\check{P}(q,a,q_i)  + \sum_{i=j+1}^{|\Qimdp|}\Big(V_{k}(q_i)-V_{k}(q_{j})\Big)\hat{P}(q,a,q_i) +V_{k}(q_{j})\\
		\mathrm{s.t.:} \quad & a\in\Aimdp
	\end{aligned}
\end{equation}

Thus far, we have that \eqref{eq:dual_opti_problem}, and thus the primal problem \eqref{original_opti_problem}, has been broken down to the combination of problem \eqref{LP1}, problem \eqref{LP2} and $|\Qimdp|-1$ instances of problems \eqref{LP3} and \eqref{LP4}. However, taking a closer look, we see that \eqref{LP4} with $j=m$ is identical to \eqref{LP3} with $j=m+1$ (their objective functions are identical, as well as their constraint sets)\footnote{Denote by $f_{j}(a)$ the objective function of \eqref{LP4} and by $g_{j}(a)$ the objective function of \eqref{LP3}. Then, we have:
\begin{align*}
	f_{m}(a) - g_{m+1}(a) &= \hspace{-2mm}\sum_{i=1}^{m-1}\Big(V_{k}(q_i)-V_{k}(q_{m})\Big)\check{P}(q,a,q_i) - \sum_{i=1}^{m-1}\Big(V_{k}(q_i)-V_{k}(q_{m})\Big)\check{P}(q,a,q_i)\\
	&+\hspace{-2mm}\sum_{i=m+1}^{|\Qimdp|}\Big(V_{k}(q_i)-V_{k}(q_{m})\Big)\hat{P}(q,a,q_i) - \hspace{-2mm}\sum_{i=m+1}^{|\Qimdp|}\Big(V_{k}(q_i)-V_{k}(q_{m})\Big)\hat{P}(q,a,q_i)\\ &+V_{k}(q_{m})-V_{k}(q_{m})\\
	&=0
\end{align*}}. Moreover, \eqref{LP2} is identical to \eqref{LP4} with $j=|\Qimdp|$. Finally, \eqref{LP1} is the same problem as \eqref{LP3} with $j=2$, which can also be written in the form of \eqref{LP4} if we put $j=1$. In other words, it suffices to only consider \eqref{LP4} for $j=1,\dots,|\Qimdp|$. To conclude, solving \eqref{eq:dual_opti_problem} is equivalent to solving the following optimization problems for $j=1,\dots,|\Qimdp|$:
\begin{equation*}
	\begin{aligned}
		\max_{a} \quad &\sum_{i=1}^{j-1}\Big(V_{k}(q_i)-V_{k}(q_{j})\Big)\check{P}(q,a,q_i)  + \sum_{i=j+1}^{|\Qimdp|}\Big(V_{k}(q_i)-V_{k}(q_{j})\Big)\hat{P}(q,a,q_i) +V_{k}(q_{j})\\
		\mathrm{s.t.:} \quad &a\in\Aimdp
	\end{aligned}
\end{equation*}
\end{proof}
\vfill
\begin{acks}
The authors would like to thank Dr. Sergio Grammatico, Dr. Amin Sharifi Kolarijani, and Dr. Gabriel de Albuquerque Gleizer for their helpful comments. 

This work was supported in part by the \grantsponsor{GSERC}{European Research Council}{https://erc.europa.eu/} through the SENTIENT project, Grant No.~\grantnum[https://cordis.europa.eu/project/id/755953]{GSERC}{ERC-2017-STG \#755953}, and \grantsponsor{NSF}{National Science Foundation (NSF)}{https://nsf.gov/} award 2039062.
\end{acks}
\bibliographystyle{ACM-Reference-Format}
\bibliography{refs}
\end{document}